\documentclass[a4paper,11pt]{article}
\usepackage[utf8]{inputenc}
\usepackage{url}
\usepackage{amssymb}
\usepackage{amsmath}
\usepackage{amsthm}
\usepackage{graphicx}
\usepackage{caption}
\usepackage{subcaption}
\usepackage[usenames,dvipsnames]{color}
\usepackage{color}
\usepackage{lineno}
\usepackage{pdfpages}

\sloppy
\addtolength{\topmargin}{-26mm}
\addtolength{\oddsidemargin}{-14mm}
\addtolength{\textwidth}{30mm}
\addtolength{\textheight}{38mm}

\newtheorem{theorem}{Theorem}
\newtheorem{corollary}{Corollary}

\newtheorem{lemma}{Lemma}

\newtheorem{property}{Property}

\newcommand{\G}{\mathcal{G}_{2}(S)}
\newcommand{\ch}{{\rm ch}}
\newcommand{\lca}{{\rm lca}}

\newcommand{\cupdot}{\mathbin{\mathaccent\cdot\cup}}




\graphicspath{{images/}}


\date{}

\thispagestyle{empty}

\begin{document}

\title{Geometric Biplane Graphs II: Graph Augmentation
\thanks{A preliminary version of this paper has been presented at the
\emph{Mexican Conference on Discrete Mathematics and Computational Geometry}, Oaxaca, M\'{e}xico, November 2013.}}

\author{
Alfredo Garc\'{i}a$^1$\and
Ferran Hurtado$^2$\and
Matias Korman$^{3,4}$\and
In\^{e}s Matos$^5$\and
Maria Saumell$^6$\and
Rodrigo I. Silveira$^{5,2}$\and
Javier Tejel$^1$\and
Csaba D. T\'{o}th$^7$
}

\maketitle

\footnotetext[1]{Departamento de M\'{e}todos Estad\'{\i}sticos, IUMA, Universidad de Zaragoza, Zaragoza, Spain.\\
\url{olaverri@unizar.es, jtejel@unizar.es}.}
\footnotetext[2]{Departament de Matem\`atica Aplicada II, Universitat Polit\`{e}cnica de Catalunya, Barcelona, Spain.\\
\url{ferran.hurtado@upc.edu, rodrigo.silveira@upc.edu}.}
\footnotetext[3]{National Institute of Informatics (NII), Tokyo, Japan. \url{korman@nii.ac.jp}}
\footnotetext[4]{Kawarabayashi Large Graph Project, ERATO, Japan Science and Technology Agency (JST).}
\footnotetext[5]{Departamento de Matem\'atica \& CIDMA, Universidade de Aveiro, Aveiro, Portugal,  \url{ipmatos@ua.pt, rodrigo.silveira@ua.pt}.}
\footnotetext[6]{Department of Mathematics and European Centre of Excellence NTIS (New Technologies for the Information Society), University of West Bohemia, Pilsen, Czech Republic. \url{saumell@kma.zcu.cz}.}
\footnotetext[7]{Department of Mathematics, California State University Northridge, Los Angeles, USA. \url{cdtoth@acm.org}.}

\begin{abstract}
We study biplane graphs drawn on a finite point set $S$ in the plane in general position. This is the family of geometric graphs whose vertex set is $S$ and which can be decomposed into two plane graphs. We show that every sufficiently large point set admits a 5-connected biplane graph and that there are arbitrarily large point sets that do not admit any 6-connected biplane graph. Furthermore,
we show that every plane graph (other than a wheel or a fan) can be augmented into a 4-connected biplane graph. However, there are arbitrarily large plane graphs that cannot be augmented to a 5-connected biplane graph by adding pairwise noncrossing edges.
\end{abstract}




\section{Introduction}

In a \emph{geometric graph} $G=(V,E)$, the vertices are distinct points in the plane in general position (that is, no three points in $S$ are collinear) and the edges are straight line segments between pairs of vertices. A \emph{plane graph} is a geometric graph in which no two edges cross. It is well known that every planar graph can be realized as a plane graph by F\'ary's theorem~\cite{F48}.

We consider a generalization of plane graphs. A geometric graph $G=(V,E)$ is \emph{$k$-plane} for some $k\in \mathbb{N}$ if its edge set can be partitioned into $k$ disjoint subsets, $E=E_1\cupdot\ldots \cupdot E_k$, such that $G_1=(V,E_1), \ldots , G_k=(V,E_k)$ are all plane graphs, where  $\cupdot$ represents the disjoint union. For a finite point set $S$ in the plane in general position, denote by $\mathcal{G}_k(S)$ the family of $k$-plane graphs with vertex set $S$.
With this terminology, $\mathcal{G}_1(S)$ is the family of plane graphs with vertex set $S$, and $\G$ is the family of 2-plane graphs (also known as \emph{biplane graphs}) with vertex set $S$.

We contrast several combinatorial properties of plane graphs $\mathcal{G}_1(S)$ and biplane graphs $\G$ for point sets $S$ in this and a companion paper~\cite{GHKMSSTT13-I}. For example, it is well known that the vertex connectivity of every plane graph in $\mathcal{G}_1(S)$ is at most 5, but it is natural to expect that the larger family $\G$ contains graphs of higher vertex connectivity. A graph $G=(V,E)$ in $\G$ is maximal if there is no graph $G'=(V,E')$ in $\G$ such that $E \subset E'$.
In~\cite{GHKMSSTT13-I} we compare combinatorial properties of maximal biplane graphs in $\G$ with triangulations in $\mathcal{G}_1(S)$, and show that there are arbitrarily large point sets $S$ for which $\G$ contains an 11-connected graph, but no biplane graph is 12-connected.
In this paper, we study the maximum vertex connectivity of a graph in $\G$ for a given point set $S$. We also consider closely related connectivity augmentation problems. We refer to \cite{GHKMSSTT13-I} and the references therein for a broad overview of further related work.\footnote{\textbf{Note:} The companion paper~\cite{GHKMSSTT13-I} contains a larger introduction to the concept of \emph{biplane graphs}, comparing it with other related concepts such as the \emph{geometric thickness} and others. Preliminary versions of both papers have been presented at the \emph{Mexican Conference on Discrete Mathematics and Computational Geometry (Oaxaca, 2013)}. To avoid repetition, a complete introduction is presented in the companion paper, and here we include only a brief self-contained introduction. }

\smallskip\noindent{\bf Organization.}
The problem of constructing a plane graph with the largest possible vertex- or edge-connectivity
on a given point set has received significant attention~\cite{DDGC97,GHHTV09,GHTV13}. The combinatorial aspect of the problem asks for characterizing the point sets that admit graphs of a certain connectivity, and the algorithmic aspect is to develop algorithms for computing highly connected graphs.
These are the topics we study in Section~\ref{sec:scratch}, considering biplane graphs instead of plane graphs.

A closely related family of problems is the \emph{graph augmentation}, in which one would like to add new edges, ideally as few as possible, to a given graph in such a way that some desired property is achieved. There has been extensive work on augmenting disconnected plane graphs to connected ones (see \cite{HT13} for a recent survey) or achieving good connectivity properties \cite{AGHTU08,ISTW10,AIR09,dobrev,kranakis, RW12,Csa12}. For abstract graphs, this corresponds to the classical connectivity augmentation problem in combinatorial optimization and has a rich history, as well.
In Section~\ref{sec:augmentation}, we consider several problems on augmenting plane graphs to biplane graphs with higher connectivity.
We conclude in Section~\ref{sec:conclusion} with some final remarks and open problems.


\section{Drawing Biplane Graphs from Scratch}\label{sec:scratch}

Given a set $S$ of $n$ points in general position, we would like to construct a graph $G\in\G$ with high vertex connectivity $\kappa(G)$. We determine the maximum $\kappa(G)$ that can be attained for every (sufficiently large) point set $S$. We also consider the special case that the point set $S$ is in convex position.

For comparison, we briefly review analogous results for plane graphs $\mathcal{G}_1(S)$
on a given point set $S$. Refer to~\cite{HT13} for a survey. Every set of $n\geq 3$
points in general position admits a spanning cycle (a polygonization of $S$), which is 2-connected. For points in convex position, every plane graph has a vertex of degree 2, so in this case $\kappa(G)=2$ is the best possible value over all $G\in \mathcal{G}_1(S)$. Since every planar graph has a vertex of degree not greater than 5, the vertex connectivity of every graph in $\mathcal{G}_1(S)$ is at most 5.
It is known that every set of $n\geq 4$ points not in convex position admits a 3-connected triangulation. Additionally, every set of $n\geq 6$ points whose convex hull is a triangle admits a 4-connected triangulation, provided that a certain condition is satisfied (see~\cite{DDGC97} for details). Characterizations for point sets in general position that admit 4-connected triangulations have only recently been proposed~\cite{DGR13,GHTV13,GHTV}, and no characterization is known for 5-connectivity.

\subsection{Point Sets in Convex Position}

We begin with the special case of points in convex position. It turns out that all biplane graphs on a point set $S$ in convex position are planar. Moreover, the maximum number of edges (resp., the maximum vertex connectivity) of a graph in $\G$ is the same as the maximum attained in the family of planar graphs with $|S|$ vertices.


\begin{lemma}\label{lem:planar}
Let $S$ be a set of $n$ points in the plane in convex position.
\begin{itemize}
\item[{\rm (i)}] Every graph in $\G$ is planar (as an abstract graph).
\item[{\rm (ii)}] If $G=(V,E)$ is a Hamiltonian planar (abstract) graph with $n$ vertices,
       then it has a geometric realization in $\G$.
\end{itemize}
\end{lemma}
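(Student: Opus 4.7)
The plan is to view the lemma through the lens of $2$-page book embeddings: a biplane graph on a convex point set is, abstractly, a $2$-page book embedding whose spine order is the cyclic order of $S$ on the convex hull. Both parts hinge on the elementary crossing criterion that two straight-line chords of a convex polygon cross if and only if their endpoints alternate along the hull.

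For part~(i), let $v_1,\ldots,v_n$ denote the points of $S$ in convex-hull order and decompose $G\in\G$ as $E=E_1 \cupdot E_2$ with both parts plane. By the crossing criterion, no two edges of $E_1$ have alternating endpoints in this cyclic order, and likewise for $E_2$. I then exhibit a drawing of $G$ without crossings (and hence planarity as an abstract graph) by placing $v_1,\ldots,v_n$ on a circle in cyclic order, routing the edges of $E_1$ as arcs inside the disk and those of $E_2$ as arcs outside. The non-alternation property inside each part makes the arcs pairwise non-crossing (nested endpoint pairs nest, disjoint pairs stay disjoint), and arcs inside the disk never meet arcs outside. Thus $G$ is planar.

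For part~(ii), I fix a Hamiltonian cycle $C = v_1 v_2 \ldots v_n$ of the planar graph $G$ together with a planar embedding of $G$. The cycle $C$ bounds two open disks on the sphere, partitioning $E\setminus C$ into an ``inside'' set $I$ and an ``outside'' set $O$. I realize $S$ as a convex polygon with vertices labeled $v_1,\ldots,v_n$ in cyclic order and define $E_1 := C \cup I$ and $E_2 := O$. A Jordan-curve argument shows that any two edges in $I$ must have non-alternating endpoints on $C$ (otherwise two disjoint curves realizing them inside $C$ would be forced to cross), and the same holds for~$O$. By the crossing criterion, the straight-line realizations of $I$ on $S$ are pairwise non-crossing chords, as are those of $O$, while the hull edges $C$ cross no chord at all. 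Consequently $(V,E_1)$ and $(V,E_2)$ are both plane on $S$, witnessing $G\in\G$.

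I expect the main obstacle to be the abstract-to-geometric translation in~(ii): edges that live inside $C$ in a chosen planar embedding of $G$ must be reinterpreted as straight segments inside a convex polygon, and the bridge between these two settings is precisely the alternation criterion. The step is not deep, but it is where the argument is easiest to get wrong by conflating the abstract planar embedding with the straight-line drawing on $S$. I would therefore state the alternation characterization as a separate observation and apply it symmetrically to $I$ and $O$ before concluding.
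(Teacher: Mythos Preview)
Your argument is correct and is precisely the $2$-page book-embedding viewpoint underlying the Bernhart--Kainen results the paper invokes; the paper in fact omits its own proof and simply cites those results, so you have supplied the proof the paper defers. The only cosmetic remark is that your split $E_1 = C\cup I$, $E_2 = O$ is not a disjoint union when some hull edges lie in $C$, but this is immaterial since hull edges cross nothing.
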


\noindent {\bf Remark}: In~\cite{BK}, Bernhart and Kainen show two results (Lemma 2.1 and Theorem 2.5), given in terms of book thickness, that are more general than Lemma~\ref{lem:planar}. Since it is straightforward to see that Lemma~\ref{lem:planar} follows from those results, we omit its proof.

\vskip 0.3cm
Note, however, that not every planar graph can be realized as a biplane graph on a point set in convex position. If $S$ is in convex position, then the boundary of the convex hull $\ch(S)$ forms a Hamiltonian cycle in every maximal (i.e., edge-maximal) graph in $\G$. Hence every maximal graph in $\G$ is planar and Hamiltonian. However, there are maximal planar graphs (triangulations) that are not Hamiltonian. (It is NP-complete to decide whether a maximal planar graphs is Hamiltonian~\cite{Chv85,Wid82}.) These planar graphs cannot be realized as a biplane graph on a point set in convex position.

We can now characterize the maximum vertex connectivity of a graph in $\G$ when $S$ is in convex position.

\begin{theorem} \label{theo:5connconvex}
Let $S$ be a set of $n$ points in convex position.
\begin{itemize}\itemsep -1pt
\item $\G$ contains a 4-connected graph if and only if $n\geq 6$.
\item $\G$ contains a 5-connected graph if and only if $n=12$ or $n\geq 14$.
\item $\G$ contains no 6-connected graphs for any $n\in \mathbb{N}$.
\end{itemize}
\end{theorem}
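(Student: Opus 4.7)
The plan is to reduce the theorem, via Lemma~\ref{lem:planar}, to three purely combinatorial questions about abstract Hamiltonian planar graphs on $n$ vertices: for each $k \in \{4, 5, 6\}$, determine the values of $n$ for which such a graph exists with vertex connectivity at least $k$. Invoking Tutte's theorem that every 4-connected planar graph is Hamiltonian, the Hamiltonicity constraint is automatic for $k \in \{4, 5\}$, so only the connectivity question remains.

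The case $k = 6$ is immediate from Euler's formula, which forces every planar graph to have a vertex of degree at most $5$. For $k = 4$, necessity of $n \geq 6$ holds because the only 4-connected graph on $\leq 5$ vertices is $K_5$, which is non-planar. For sufficiency, for each $n \geq 6$ I propose the bipyramid $B_n$: an $(n-2)$-cycle $c_1, \ldots, c_{n-2}$ plus two apex vertices $a, b$ each joined to every $c_i$. Planarity is obvious; 4-connectivity follows from a short case analysis on the three removed vertices according to how many are apexes; and the sequence $a, c_1, b, c_2, c_3, \ldots, c_{n-2}$ closes into a Hamiltonian cycle.

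For $k = 5$, necessity of $n \geq 12$ follows by combining $\delta \geq 5$ with Euler's bound $|E| \leq 3n-6$. The exclusion of $n = 13$ requires a more delicate argument: the inequalities force $|E| = 33$ and the degree sequence $5^{12} 6^1$, and 5-connectivity rules out separating triangles; examining the unique degree-$6$ vertex $v$ then forces its non-neighbor vertices to form a $6$-cycle that must be supplemented by a perfect matching of three further chords drawn in the outer region, but no such matching can be drawn planarly on a hexagon, a contradiction. Sufficiency: $n = 12$ is handled by the icosahedron. For each even $n = 2k+2 \geq 14$ I use the generalized antiprism-bipyramid $I_k$ consisting of two $k$-cycles $U$ and $V$ linked in an antiprism pattern, together with two apex vertices $a$ and $b$ joined respectively to all of $U$ and all of $V$. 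It has minimum degree $5$, is planar, and its 5-connectivity follows from a case analysis on potential 4-vertex cuts. For odd $n \geq 15$ I proceed inductively from the $(n-1)$-vertex construction by a planarity-preserving vertex split at a vertex of degree $\geq 6$: the apexes of $I_k$ have degree $k \geq 6$, so we replace such a vertex by two new vertices joined by an edge, partitioning its cyclic neighborhood into two overlapping arcs that share exactly two consecutive neighbors. With a careful choice this preserves planarity, minimum degree $5$, 5-connectivity, and the presence of some degree-$\geq 6$ vertex to continue the induction.

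The main obstacles are the structural argument excluding $n = 13$ and the verification of 5-connectivity both of $I_k$ and of each split-successor: in each case one must rule out every potential small vertex cut, and in the split step one must additionally ensure that the new edge together with the two shared neighbors cannot themselves create a new 4-cut. These verifications are the most technical part and will account for the bulk of the proof.
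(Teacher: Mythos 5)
Your reduction is exactly the paper's: Lemma~\ref{lem:planar} plus Tutte's theorem turn all three items into existence questions for $k$-connected planar graphs on $n$ vertices, and the Euler-formula bounds $kn\leq 6n-12$ give the three necessity statements just as in the paper. Where you genuinely diverge is in the existence part for $k=4,5$ and in the exclusion of $n=13$. The paper disposes of these by citation: vertex splits from the octahedron for $k=4$, and the Barnette--Butler generation theorems for 5-connected triangulations (which produce all of them from the icosahedron and, as a byproduct, show none exists on 13 vertices). You instead give explicit families (the bipyramid; the antiprism with two apexes for even $n$, followed by an apex split for odd $n$) and a direct structural argument for $n=13$. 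Your route is self-contained and arguably more informative -- it yields concrete graphs rather than an appeal to a generation procedure -- but it transfers the entire burden of Barnette--Butler onto your own case analyses: 5-connectivity of $I_k$, correctness of the split, and the $n=13$ exclusion all have to be verified by hand, which is precisely what the cited results let the paper avoid.

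One step of your $n=13$ sketch is not correct as stated and needs repair. The claim that ``no perfect matching of three chords can be drawn planarly on a hexagon'' is false if the chords may be distributed over both sides of the hexagon: the $6$-cycle $w_1\cdots w_6$ together with the matching $w_1w_4$, $w_2w_6$, $w_3w_5$ is the triangular prism, which is planar. What actually forces the contradiction is that, once you show the six non-neighbors of the degree-$6$ vertex $v$ induce a $6$-cycle nested inside the link of $v$, the disk bounded by that $6$-cycle on the side away from $v$ contains no vertices, so it is triangulated by exactly three pairwise noncrossing chords; a degree count (each link vertex of $v$ has exactly two neighbors among the $w_i$, which forces the intermediate annulus to be an antiprism, whence each $w_i$ has exactly one chord) shows these three chords form a perfect matching, and three pairwise noncrossing chords of a hexagon confined to a single disk cannot be a perfect matching. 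Both the ``nested $6$-cycle'' claim and the localization of all three chords to one face are genuine steps you must supply; without them the argument does not close. The rest of the proposal (bipyramid, $I_k$, and the split) is routine to verify and sound.
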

\begin{proof}
It is well known that every 4-connected planar graph $G$ has a Hamiltonian cycle~\cite{T56}.
By Lemma~\ref{lem:planar}(ii), it is enough to establish the existence or nonexistence of a $k$-connected planar graph for a given $n$ for $k=4$, 5, and 6.

By Lemma~\ref{lem:planar}(i), every graph in $\G$ is planar. Every planar graph on $n\geq 3$ vertices has at most $3n-6$ edges, and the sum of vertex degrees is at most $6n-12$. In a $k$-connected graph, the degree of every vertex is at least $k$, and the sum of vertex degrees is at least $kn$. Comparing these bounds, we have $kn\leq 6n-12$ or $12/(6-k)\leq n$. It follows that no planar graph is 6-connected, every 5-connected planar graph has at least 12 vertices, and every 4-connected planar graph has at least 6 vertices.

It is easy to see that there is a 4-connected planar graph on $n$ vertices for every $n\geq 6$. Specifically, the 1-skeleton of the octahedron is 4-connected with 6 vertices; and a vertex split operation can increase the number of vertices by one while maintaining 4-connectivity and planarity. In an embedding of a 4-connected planar graph on $n$ vertices, this split operation removes an edge $(u,v)$ and adds a new vertex connected to all the vertices of the two faces adjacent to $(u,v)$.


Barnette~\cite{Bar74} and Butler~\cite{But74} independently designed algorithms for generating all 5-connected triangulations, using simple operations starting from the icosahedron (see also \cite{BMK05}). The 1-skeleton of the icosahedron is 5-connected with 12 vertices, and each operation either splits a vertex of degree 6 or higher, or simultaneously splits two adjacent vertices.
Hence there is a 5-connected planar graph for $n=12$ and for every $n\geq 14$ (but not for $n=13$).
\end{proof}

\paragraph{Remark.}
We have shown that $\G$ contains 4- and 5-connected graphs when $n\geq 6$ and $n\geq 14$, respectively. The existence proof in Theorem~\ref{theo:5connconvex} can be turned into an $O(n)$-time algorithm for constructing such biplane graphs. Here we present explicit constructions for 5-connected biplane graphs for points in convex position when $n=12$ and $n\ge 14$. The construction in Figure~\ref{pic:5convex} (left) works when $n$ is even and $n\ge 12$. It is based on two plane spanning trees, $T_1$ and $T_2$, and the edges of $\ch(S)$. Each spanning tree consists of two stars with 3 leaves each, connected by a zig-zag path. Let us assume that the points are numbered clockwise. Then, the centers of the stars of $T_1$ (the nondashed spanning tree) are placed at opposite points, say $i$ and $i+n/2$, the leaves of the first star are placed at points $i+1, i+2$ and $i+3$ and the leaves of the second star are placed at points $i+n/2+1, i+n/2+2$ and $i+n/2+3$. The zig-zag path connects the centers of the stars visiting alternatively the points $\{i+4,\ldots , i+n/2-1\}$ and the points $\{i-1, i-2, \ldots , i+n/2+4\}$. Tree $T_2$ (the dashed spanning tree) is symmetric to $T_1$, where point $i+n/2+1$ is the image of point $i$ and $i+1$ is the image of point $i+n/2$. Note that the only common edges to both trees are $(i,i+1)$ and $(i+n/2, i+n/2+1)$. The construction in Figure~\ref{pic:5convex} (right) works when $n$ is odd and $n\ge 15$. It is analogous to the previous construction, but one of the stars in $T_1$ and $T_2$ has 4 leaves instead of 3.


\begin{figure}[htb]
\centering
\includegraphics[scale=0.5]{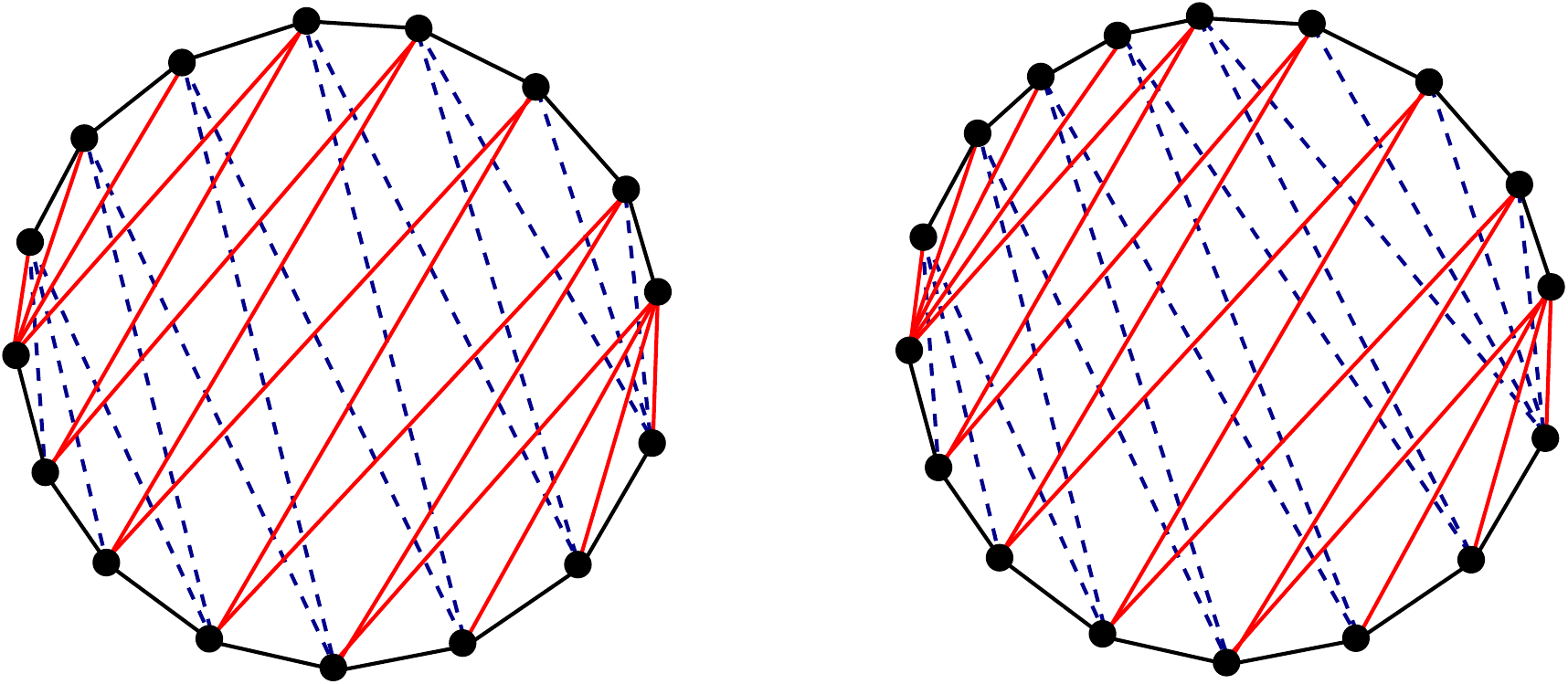}
\caption{Two 5-connected biplane graphs for points in convex position.}
\label{pic:5convex}
\end{figure}

\subsection{Point Sets in General Position}
\label{ssec:gen}

In this section we find the largest $k\in \mathbb{N}$ such that every sufficiently large point set $S$ in general position admits a $k$-connected biplane graph. Hutchinson et al.~\cite{HSV99} proved that every biplane graph in $\G$ has at most $6n-18$ edges for $n\geq 8$. In particular, this implies that every biplane graph contains a vertex of degree 11 or less, hence $k\leq 11$. Theorem~\ref{theo:5connconvex} directly improves the bound to $k\leq 5$. We now show that this bound is tight, that is, every sufficiently large point set $S$ admits a 5-connected biplane graph.

\begin{theorem} \label{theo:5conngeneric}
Let $S$ be a set of $n$ points in the plane in general position.
If $S$ contains at least 14 points in convex position,
then there is a 5-connected graph in $\G$.
\end{theorem}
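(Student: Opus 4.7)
The plan is to bootstrap Theorem~\ref{theo:5connconvex} from $14$ convex-position points to the full set $S$ by an incremental construction. Let $C\subset S$ be a subset of $14$ points in convex position (provided by the hypothesis), and apply Theorem~\ref{theo:5connconvex} to obtain a $5$-connected biplane graph $H_C=G_1\cup G_2$ on $C$. I will then process the remaining points $S\setminus C$ in an order $p_1,\ldots,p_{n-14}$, maintaining at each step $i$ the invariant that the current graph $H_i$ on $C\cup\{p_1,\ldots,p_i\}$ is biplane and $5$-connected.

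The inductive step is enabled by the standard fact that adding a new vertex $p$ with $\deg(p)\ge 5$ to a $5$-connected graph yields a $5$-connected graph: any cut-set $T$ of size at most $4$ either contains $p$, in which case we delete at most $3$ vertices from the $5$-connected base, or excludes $p$, in which case $p$ retains at least one neighbor. So it suffices that each insertion can add at least $5$ edges to the current graph while preserving the biplane partition. Since $p_{i+1}$ lies in a unique face $F_1$ of the current plane graph $G_1^{(i)}$ and in a unique face $F_2$ of $G_2^{(i)}$, new edges from $p_{i+1}$ to vertices visible inside $F_1$ (placed in plane~$1$) and to vertices visible inside $F_2$ (placed in plane~$2$) can be added without creating crossings in either plane. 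The task thus reduces to ensuring that there are at least $5$ distinct such visible vertices across the two faces.

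The main obstacle is precisely this visibility requirement: after several insertions the faces of $G_1^{(i)}$ and $G_2^{(i)}$ can shrink to triangles, and the two faces containing $p_{i+1}$ may have largely overlapping boundary vertex sets. I plan to address this by (i) processing $S\setminus C$ from the outside in (for instance, by the convex layers of $S\setminus C$ with respect to $\ch(C)$), so each new point lies in a face whose boundary still inherits many vertices of $C$; (ii) exploiting the specific structure of the initial $H_C$ from Figure~\ref{pic:5convex}---two spanning trees with ``wide'' bounded faces plus the convex-hull cycle---so that the early faces already have ample visible vertices; and (iii) interleaving the insertions, when necessary, with local refinements that add biplane diagonals to prevent any face of either plane from falling below $5$ boundary vertices, distributing the new diagonals between the two planes so that each one stays crossing-free. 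The bulk of the technical work will lie in verifying ingredients (ii) and (iii); connectivity then follows automatically from the standard fact above.
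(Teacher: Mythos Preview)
Your overall incremental strategy---build a $5$-connected biplane graph on a convex-position subset and then insert the remaining points one by one---matches the paper's approach, and your ``standard fact'' is Property~\ref{pro:property1+}. The gap lies entirely in your plan for guaranteeing five distinct neighbors at each insertion. Plan~(iii) cannot work: adding a diagonal \emph{splits} a face into smaller ones, so refinement drives face sizes in the wrong direction, and Euler's formula rules out keeping all faces of a plane graph with $\ge 5$ boundary vertices once there are more than a handful of points. Even a large face can be non-convex, so the number of vertices \emph{visible} from an interior point may still be only three regardless of face size. Plans~(i) and~(ii) do not compensate: the special structure of $H_C$ is destroyed after a few insertions, and the processing order alone cannot control how many vertices the two containing faces share.

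The paper resolves the obstacle by a mechanism you do not mention. Before each insertion it completes the current biplane graph to a \emph{maximal} one, which by Lemma~\ref{lem:tri} is the union of two triangulations $T_1,T_2$; the new point $s$ then lies in triangles $\Delta_1\subset T_1$ and $\Delta_2\subset T_2$. If $\Delta_1$ and $\Delta_2$ together have $\ge 5$ distinct vertices, Property~\ref{pro:property1+} suffices. When they share an edge or coincide, the paper performs one or two \emph{edge flips} (Lemma~\ref{lem:flip+}) to raise $\deg(s)$ to five and invokes Property~\ref{pro:property2+}, which permits \emph{deleting} an edge $vw$ provided the new vertex is joined to $v$, $w$, and three further vertices. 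Your proposal never contemplates edge removal, so it lacks the tool that actually makes the inductive step go through. A secondary omission: the paper takes a \emph{maximum-cardinality} convex subset $S_0$ (ties broken by area), not an arbitrary $14$-point set; this maximality is precisely what establishes Property~\ref{pro:maxi} for the exterior points and enables Lemma~\ref{lem:incrementalBoundary}.
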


Erd\H{o}s and Szekeres proved that for every $k\in\mathbb{N}$ there is an integer $f(k)$ such that every set of at least $f(k)$ points in the plane in general position contains a subset of $k$ points in convex position. They conjectured $f(k)=2^{k-2}+1$, and showed $f(k)\geq 2^{k-2}+1$. The currently best upper bound~\cite{TV05} for $k\geq 7$ is $f(k)\leq {2k-5\choose k-2}+1$. When $k=14$, this result implies that every set $S$ of $n$ points in general position has a subset of at least 14 points in convex position for $n\geq {2\cdot 14-5\choose 14-2}+1= 1352079$. Therefore, every sufficiently large point set $S$ in general position satisfies the condition in Theorem~\ref{theo:5conngeneric}.

\begin{corollary}\label{cor:5conngeneric}
If $S$ is a set of $n\geq 1352079$ points in the plane in general position, then there is a 5-connected biplane graph in $\G$.
\end{corollary}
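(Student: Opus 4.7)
The plan is a reduction to the convex-position case of Theorem~\ref{theo:5connconvex}. Choose a subset $C \subseteq S$ of $14$ points in convex position (which exists by hypothesis) and let $P = S \setminus C$. By Theorem~\ref{theo:5connconvex}, there is a 5-connected biplane graph $G_0 = (C, E^{(1)} \cupdot E^{(2)})$ on $C$; denote the two plane subgraphs by $G_0^{(1)}$ and $G_0^{(2)}$. The strategy is to extend $G_0$ incrementally to a 5-connected biplane graph on $S$ by inserting the points of $P$ one at a time, each with at least $5$ incident edges.

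A standard vertex-cut argument shows that inserting a vertex of degree at least $5$ into a 5-connected graph preserves 5-connectivity. Indeed, suppose a cut $C'$ of size less than $5$ arose in the augmented graph. If $p \in C'$, then $C' \setminus \{p\}$ would be a cut of size at most $3$ in the previous graph, contradicting its $4$-connectivity. If $p \notin C'$, then $C'$ would be a cut of size less than $5$ in the previous graph (since $p$ has at least $5$ neighbors but $|C'| < 5$, so $p$ cannot be isolated by $C'$), contradicting its $5$-connectivity. Hence it suffices to guarantee that every inserted point receives at least $5$ incident edges and that the biplane property is preserved.

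The main obstacle is the insertion itself: for each $p \in P$, one must select at least five neighbors in the current vertex set and distribute the new edges between the two plane subgraphs $G^{(1)}$ and $G^{(2)}$ so that each remains plane. A natural choice is to locate $p$ inside a face $F^{(i)}$ of each plane subgraph $G^{(i)}$ and to add straight-line segments from $p$ to vertices of $\partial F^{(i)}$; such segments lie inside $F^{(i)}$ and can be safely added to $G^{(i)}$ without creating crossings. Because each face has at least $3$ boundary vertices, $\partial F^{(1)} \cup \partial F^{(2)}$ already provides at least $6$ candidate neighbors. Two subtleties must be handled with care: (a) in a non-convex face some boundary vertices may not be visible from $p$, so one needs to argue that among the candidates in $\partial F^{(1)} \cup \partial F^{(2)}$ at least five are jointly usable; and (b) points of $P$ lying outside $\ch(C)$ must be incorporated by a local modification of the convex hull, for example by splitting a hull edge whose supporting line separates $p$ from the interior and then connecting $p$ to several nearby interior vertices.

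Iterating this insertion through all points of $P$, and performing the case analysis that resolves the visibility and exterior-point issues above, produces a 5-connected biplane graph on $S$, completing the argument. The principal difficulty I expect is the visibility issue in item~(a): handling this will likely require exploiting the specific combinatorial structure of the construction from Theorem~\ref{theo:5connconvex} (two plane spanning trees plus the convex hull cycle of $C$), or a preliminary refinement of $G_0$ that subdivides large faces into smaller, near-convex regions before any interior points are inserted.
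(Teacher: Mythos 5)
The corollary you are asked to prove is, in the paper, an immediate consequence of Theorem~\ref{theo:5conngeneric}: the threshold $1352079={23\choose 12}+1$ is exactly the T\'oth--Valtr bound on the Erd\H{o}s--Szekeres number $f(14)$, so any such $S$ contains $14$ points in convex position and Theorem~\ref{theo:5conngeneric} applies verbatim. (Note also that the existence of $14$ points in convex position is not ``by hypothesis''; it requires citing Erd\H{o}s--Szekeres with that quantitative bound.) Instead of invoking Theorem~\ref{theo:5conngeneric}, you attempt to re-derive it by incremental insertion, and that re-derivation has genuine gaps precisely at the points you flag as ``subtleties.'' The central one: your count ``each face has at least $3$ boundary vertices, so $\partial F^{(1)}\cup\partial F^{(2)}$ provides at least $6$ candidates'' is false, because the two faces containing $p$ in the two layers need not be vertex-disjoint --- even after refining both layers to triangulations (which the paper does via Lemma~\ref{lem:tri}, and which is also how it disposes of your visibility issue (a)), the two triangles containing $p$ may coincide or share an edge, leaving only $3$ or $4$ distinct candidate neighbors. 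Getting $p$ up to degree $5$ in that situation is the technical heart of the paper's Lemma~\ref{lem:incremental} and requires the edge-flip machinery of Lemma~\ref{lem:flip+} together with a delicate case analysis (and a $3$-cut argument to rule out a bad configuration); nothing in your sketch substitutes for it.

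The second gap is your treatment of points outside the convex hull. You take $C$ to be an \emph{arbitrary} $14$-point convex subset, so essentially all of the roughly $1.3\cdot 10^6$ points of $P$ may lie outside $\ch(C)$, and a point far outside the hull sees only a few hull edges; ``splitting a hull edge and connecting $p$ to several nearby interior vertices'' does not yield five new noncrossing edges distributed over two plane layers. The paper avoids this by choosing $S_0$ to be a convex subset of \emph{maximum cardinality} (and, among those, maximum hull area), which is what guarantees Property~\ref{pro:maxi}; it then inserts interior points first, then the vertices of $\ch(S)$ via a Hall's-theorem matching between hull edges (Lemma~\ref{lem:incrementalBoundary}), and finally the remaining exterior points in a carefully chosen order. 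Your outline, as written, neither establishes the visibility/degree guarantees needed for each insertion nor specifies an insertion order for which they could hold, so it does not constitute a proof. The short, correct route is simply: Erd\H{o}s--Szekeres with the bound $f(14)\le{23\choose 12}+1$, then Theorem~\ref{theo:5conngeneric}.
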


\paragraph{Outline.}
The remainder of Section~\ref{ssec:gen} is devoted to the proof of Theorem~\ref{theo:5conngeneric}.
Our approach is as follows: given a point set $S$, let $S_0\subset S$ be a largest subset of points in convex position (in case of ties, choose a set $S_0$ whose convex hull has the largest area). By assumption, we have $|S_0|\geq 14$. By Theorem~\ref{theo:5connconvex}, $S_0$ admits a 5-connected biplane graph $G_0$. We increment $G_0$ with new vertices from $S\setminus S_0$, maintaining a 5-connected biplane graph, in 3 phases: we first insert the points lying in the interior of $\ch(S_0)$, then the vertices of $\ch(S)$, and finally all remaining points (which lie in the exterior of $\ch(S_0)$). We continue with the details.

\smallskip\paragraph{\large Preliminaries.}
To ensure that we maintain 5-connectivity, we use the following well-known properties of graphs.
\begin{property}\label{pro:property1+}
Let $G=(V,E)$ be a $k$-connected (abstract) graph. Augment $G$ with a new vertex $x$ joined to $k$ vertices of $G$. Then the new graph on vertex set $V\cup \{x\}$ is also $k$-connected.
\end{property}

\begin{property}\label{pro:property2+}
Let $G=(V,E)$ be a $k$-connected (abstract) graph in which $vw$ is an edge. Remove edge $vw$ from $G$, and augment it with a new vertex $x$ joined to both $v$, $w$ and to $k-2$ additional vertices. Then the new graph with vertex set $V\cup \{x\}$ is also $k$-connected.
\end{property}

Since adding edges to a graph can only increase the vertex connectivity, we may assume in each phase of our algorithm that
we have started with a 5-connected maximal biplane graph. Thus, we can rely on the following two structural results for maximal biplane graphs from the companion paper~\cite{GHKMSSTT13-I}.

\begin{lemma} \cite{GHKMSSTT13-I}\label{lem:tri}
Let $G=(S,E)$ be a maximal biplane graph in $\G$.
Then there are two triangulations $T_1=(S,E_1)$ and $T_2=(S,E_2)$
such that $E=E_1\cup E_2$.
\end{lemma}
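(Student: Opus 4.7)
The plan is to build the two triangulations by extending an initial biplane decomposition, using the maximality of $G$ to guarantee that every added edge already lies in $E$. Fix a biplane decomposition $E=F_1\cupdot F_2$ into plane graphs. I will extend $F_1$ to a triangulation $T_1$ of $S$ and, symmetrically, $F_2$ to a triangulation $T_2$, in such a way that every added edge already belongs to $E$. Once this is done, $F_1\subseteq T_1$ and $F_2\subseteq T_2$ give $T_1\cup T_2\supseteq F_1\cup F_2=E$, while the inclusions $T_1,T_2\subseteq E$ coming from the extension procedure give the reverse containment; equality of $E$ and $T_1\cup T_2$ then follows.

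The core step is the following \emph{swap observation}. Suppose $H$ is a plane subgraph of $G$ with $F_1\subseteq H\subseteq E$, write $H=F_1\cup A$ with $A\subseteq F_2$, and let $e$ be any edge such that $H+e$ is still plane. Then $e$ must lie in $E$. Indeed, if $e\notin E$, then
\[
E\cup\{e\}=(H+e)\;\cupdot\;(F_2\setminus A)
\]
would be a biplane decomposition of $E\cup\{e\}$: the first part is plane by hypothesis, and the second part is a subgraph of the plane graph $F_2$. This would contradict the maximality of $G$. Hence $e\in E$, and since $e\notin H\supseteq F_1$, in fact $e\in F_2\setminus A$.

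With this observation, the construction is an easy greedy extension: starting from $H=F_1$, whenever $H$ fails to be a triangulation of $S$ there is a straight-line edge $e\notin H$ with $H+e$ still plane (this is the classical fact that any plane straight-line graph on $S$ can be augmented to a triangulation of $S$); by the swap observation we can add such an $e$ while keeping $H\subseteq E$. Iterating terminates at a triangulation $T_1$ with $F_1\subseteq T_1\subseteq E$, and the completely symmetric procedure applied to $F_2$ yields $T_2$ with $F_2\subseteq T_2\subseteq E$.

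I expect the only delicate point to be making sure the extension procedure really produces a triangulation of $S$, not merely a maximal plane subgraph of $G$: one must triangulate each non-triangular bounded face and also fill in any missing convex-hull edges and any region between disconnected components of $F_1$. This is the standard fact that every plane straight-line graph on a finite point set in general position admits an extension to a triangulation, and the swap observation applies uniformly to \emph{every} edge introduced along the way, so no additional work is needed beyond invoking that standard fact.
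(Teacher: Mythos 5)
Your proof is correct: the swap observation (any plane-preserving addition to $F_1\cup A$ would, if absent from $E$, yield a larger biplane graph $(H+e)\cupdot(F_2\setminus A)$, contradicting maximality) combined with the standard fact that any plane straight-line graph on $S$ extends to a triangulation gives exactly the desired $T_1,T_2$ with $E=T_1\cup T_2$. This paper only cites the companion paper for this lemma, but your argument is the standard one used there, so nothing further is needed.
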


Given an edge $e\in E$ in a triangulation $T=(S,E)$, we denote by $Q(e)$ the quadrilateral
formed by the two triangles adjacent to $e$. Note that $Q(e)$ is not defined if $e$ is an edge of $\ch(S)$. An edge $e$ is \emph{flippable} if and only if $Q(e)$ is a convex quadrilateral.

\begin{lemma}\cite{GHKMSSTT13-I}\label{lem:flip}
Let $G=(S,E)$ be a maximal biplane graph in $\G$ such that $E=E_1\cup E_2$, where
$T_1=(S,E_1)$ and $T_2=(S,E_2)$ are two triangulations. Every edge of $E_1\cap E_2$ is
flippable in neither $T_1$ nor $T_2$. Furthermore, every maximal biplane graph with
$n\geq 4$ vertices is 3-connected.
\end{lemma}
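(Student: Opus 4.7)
My plan is to attack each of the two statements by contradiction, using maximality of $G$ to exhibit a strictly larger biplane graph.

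For the first statement, suppose some $e=uv\in E_1\cap E_2$ were flippable in $T_1$, and let $e'=xy$ be the other diagonal of the convex quadrilateral $Q(e)$ in $T_1$, so that $e$ and $e'$ cross. The key observation is that $e'\notin E$: if $e'\in E_1$ then $T_1$ would contain two crossing edges, and if $e'\in E_2$ then $T_2$ would. Define the flipped triangulation $T_1':=(E_1\setminus\{e\})\cup\{e'\}$; then $T_1'$ and $T_2$ are each plane, so $T_1'\cup T_2$ is biplane, with edge set $(E_1\setminus\{e\})\cup\{e'\}\cup E_2 = E\cup\{e'\}$, where I have used $e\in E_2$ to keep $e$ in the union. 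Since $e'\notin E$, this strictly extends $G$, contradicting maximality. By symmetry the same argument rules out flippability in $T_2$.

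For the second statement, I would first dispose of $n=4$: any biplane graph on $4$ points in general position sits inside $K_4$, and maximality together with the fact that $K_4$ is biplane (take the two triangulations that differ only in the diagonal used when the hull is a quadrilateral, or the unique triangulation when the hull is a triangle) forces $G=K_4$, which is $3$-connected. For $n\geq 5$, assume for contradiction that $G$ has a cut set $C$ with $|C|\leq 2$, and let $A$, $B$ be two distinct components of $G-C$. The plan is to produce an edge $ab$ with $a\in A$, $b\in B$ that can be added to one of $T_1$ or $T_2$. Choosing $ab$ to minimize Euclidean length over all cross-component pairs, the minimality constrains which edges of $T_1$ (or $T_2$) may cross the segment $ab$: any such crossing edge must have at least one endpoint in $C$, for otherwise one of its two endpoints would yield a shorter cross-component pair. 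If some crossing edge is not in $E_1\cap E_2$, it is flippable in its triangulation, and one may perform flips to remove crossings until either $ab$ becomes addable or one of the triangulations changes in a way that again contradicts maximality.

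The main obstacle is the second statement, specifically the subcase in which every edge crossing $ab$ lies in $E_1\cap E_2$ and is therefore nonflippable by part $(1)$. In that case, the cut set $C$ together with these rigid common edges must form a very restricted configuration, essentially a ``thin'' bottleneck across which no bridging is possible, and I would aim to show that such a configuration can exist only when $n\leq 4$, or else directly exhibit a biplane edge addition by re-decomposing $T_1$ and $T_2$ globally rather than locally. Part $(1)$ plays a crucial role here, since it is precisely what tells us that the common edges of the two triangulations cannot be ``moved out of the way'' by a single flip.
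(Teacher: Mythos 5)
First, note that the paper does not prove this lemma at all: it is imported verbatim from the companion paper \cite{GHKMSSTT13-I}, so there is no in-paper proof to compare against. Judged on its own terms, your argument for the first assertion is correct and complete: flipping a common edge $e\in E_1\cap E_2$ in one layer keeps $e$ in the union via the other layer while introducing the new diagonal $e'$, which cannot already lie in $E$ without creating a crossing inside a layer; this contradicts maximality. The $n=4$ base case of the second assertion is also fine.

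The second assertion for $n\geq 5$, however, has a genuine gap, and in fact two. First, the metric step is false as stated: if $ab$ is a shortest cross-component pair and $pq$ is an edge crossing $ab$, it does \emph{not} follow that $p$ or $q$ lies in the cut set $C$ or yields a shorter cross-component pair. From the crossing one only gets $|ap|+|bq|<|ab|+|pq|$, which is compatible with all four of $|ap|,|aq|,|bp|,|bq|$ exceeding $|ab|$ (take $pq$ long and nearly perpendicular to $ab$ through its midpoint). Second, you explicitly leave open the decisive case in which every edge crossing $ab$ lies in $E_1\cap E_2$; ``re-decomposing $T_1$ and $T_2$ globally'' is not an argument. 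The statement is therefore not proved.

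There is a short route that closes the gap and genuinely uses the first assertion, as the phrasing of the lemma suggests. Since $T_1$ is a spanning triangulation of $G$ and $n\geq 4$, $G\supseteq T_1$ is $2$-connected; and any $2$-vertex cut $\{u,v\}$ of $G$ is a $2$-vertex cut of both spanning triangulations $T_1$ and $T_2$. The $2$-vertex cuts of a triangulation are exactly the pairs whose connecting edge is a chord of $\ch(S)$, so $uv$ is a chord lying in $E_1\cap E_2$. But a chord of $\ch(S)$ is always flippable: its two adjacent triangles in $T_1$ have apexes $w_1,w_2$ strictly on opposite sides of the line $uv$, the segment $w_1w_2$ stays inside $\ch(S)$ and hence meets that line inside the segment $uv$ (because both $u$ and $v$ are hull vertices, the hull meets the line $uv$ exactly in the segment $uv$), so $Q(uv)$ is convex. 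This contradicts the first assertion, so $G$ has no $2$-vertex cut and is $3$-connected.
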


The following tool (Lemma~\ref{lem:flip+}) is crucial for increasing the vertex degree of
a vertex in a triangulation. This tool is applicable to all triangulations other than the wheel.
(A \emph{wheel} is a triangulation on $n$ points such that $n-1$ points are in
convex position and one point lies in the interior of $\ch(S)$, the points on the convex hull
induce a cycle on the boundary of $\ch(S)$ and the interior point is joined to all other $n-1$
points.)

\begin{lemma}\label{lem:flip+}
Let $T=(S,E)$ be a triangulation other than the wheel. Let $s\in S$ be a point in the interior of $\ch(S)$ such that it is adjacent to a vertex on the boundary of $\ch(S)$, and the graph induced by its neighbors in $T$ is a cycle. Then $T$ contains a triangle incident to $s$ in which the edge opposite to $s$ is flippable.
\end{lemma}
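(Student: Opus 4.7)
My plan is to argue the contrapositive: assume no opposite edge of $s$ is flippable, and deduce that $T$ must be the wheel.

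First I set up the geometric flippability criterion. For the triangle $T_i=\triangle sv_iv_{i+1}$ with opposite edge $e_i=v_iv_{i+1}$, if $e_i$ is a convex-hull edge of $S$ then it has no adjacent triangle on the outside and cannot be flipped. Otherwise the adjacent outside triangle $\triangle v_iv_{i+1}x_i$ defines an apex $x_i$, and $e_i$ is flippable iff $Q(e_i)=sv_ix_iv_{i+1}$ is convex, equivalently iff $x_i$ lies in the open wedge $W_i$ at $s$ bounded by the rays $sv_i, sv_{i+1}$ on the far side of $v_iv_{i+1}$. Non-flippability of a non-hull $e_i$ therefore places $x_i$ in one of two reflex wedges: the wedge at $v_i$ of opening $\alpha_i=\angle sv_iv_{i+1}$ bounded by the rays $v_is$ and $v_iv_{i+1}$ extended past $v_i$ (``behind $v_i$''), or the symmetric wedge behind $v_{i+1}$. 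The no-chord hypothesis guarantees $x_i\notin\{v_k\}$ whenever $e_i$ is non-hull.

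The core step is to show that every $e_i$ must actually be a hull edge. Fix a non-hull $e_i$ and let $R_i=\ch(S)\cap \overline H_i$, where $H_i$ is the far half-plane of $e_i$. In the first sub-case, both $v_i,v_{i+1}\in\partial\ch(S)$; then they are not adjacent on the hull, $R_i$ is bounded by $e_i$ together with the short hull arc from $v_i$ to $v_{i+1}$ through the angular wedge $\sigma_i=\angle v_isv_{i+1}$ at $s$, and convexity of $\ch(S)$ gives $R_i\subseteq W_i$. Since $x_i\in R_i$, we obtain $x_i\in W_i$, so $e_i$ is flippable, contradicting our assumption. In the second sub-case, at least one of $v_i, v_{i+1}$ is an interior vertex of $\ch(S)$; here $R_i$ generally extends beyond $W_i$, so the direct argument fails. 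For this case I would exploit that near any hull vertex $v$ the reflex wedge behind $v$ lies outside $\ch(S)$ (because $\ch(S)$ is convex and $s$ is on the interior side of both hull edges at $v$) and hence contains no vertex of $S$; this forces every non-flippable non-hull edge incident to a hull $v_k$ to be blocked at the \emph{other} endpoint. Walking around the link starting from the hull vertex guaranteed by the hypothesis and propagating this blocker-side constraint, one shows the link cannot admit an interior vertex without producing a flippable opposite edge somewhere, or else the wedges close up in a configuration that violates general position along the line through $s$ and that interior vertex.

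Once every $e_i$ is a hull edge, the cycle $v_1v_2\cdots v_d$ consists of consecutive hull edges and must coincide with the full boundary of $\ch(S)$, so the $v_i$'s are precisely the hull vertices of $S$ in cyclic order and $P=\partial\ch(S)$. Then the interior of $P$ equals the interior of $\ch(S)$ and is covered by the triangles $T_i$ incident to $s$, leaving no room for any vertex of $S$ other than $s$ in the interior. Hence $s$ is the unique interior vertex and $T$ is exactly the wheel, contradicting the hypothesis.

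The main obstacle I anticipate is the interior-link-vertex sub-case: the clean convexity argument works only when both endpoints of the offending $e_i$ lie on $\partial\ch(S)$. The propagation of ``blocker-at-the-other-endpoint'' constraints around the link, justified by the emptiness of reflex wedges behind hull vertices and the absence of chords among the $\{v_k\}$, is the technical workhorse; I also expect this step is exactly what makes the no-chord hypothesis and the ``other than a wheel'' exclusion both necessary for the lemma.
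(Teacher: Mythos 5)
Your setup is sound and your first sub-case is correct: the observation that the ``behind-$v$'' wedge at a hull vertex $v$ contains no point of $S$, and the argument that a non-hull link edge with both endpoints on the boundary of $\ch(S)$ is automatically flippable (since $v_i,v_{i+1}$ being extreme points forces $R_i\subseteq W_i$), are both valid, and the former is exactly the anchoring step of the paper's proof, which starts the walk at a hull vertex $v_1$ of the link with $v_1v_2$ not a hull edge.

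The genuine gap is the propagation step, which you yourself flag as ``the technical workhorse'' but never supply. What is needed is: if $e_i=v_iv_{i+1}$ is non-flippable with blocker $x_i$ behind $v_{i+1}$ (equivalently, $v_{i+1}$ is the reflex vertex of $Q(e_i)$, hence interior), then the blocker of $e_{i+1}=v_{i+1}v_{i+2}$ cannot also sit behind $v_{i+1}$ and must sit behind $v_{i+2}$. The correct mechanism is an angle count at the interior vertex $v_{i+1}$: ``$x_i$ behind $v_{i+1}$'' means $\angle x_iv_{i+1}v_i+\angle v_iv_{i+1}s>\pi$, and ``$x_{i+1}$ behind $v_{i+1}$'' means $\angle sv_{i+1}v_{i+2}+\angle v_{i+2}v_{i+1}x_{i+1}>\pi$; because the induced-cycle hypothesis gives $v_iv_{i+2}\notin E$, the edges $e_i$ and $e_{i+1}$ do not share their outer triangle, so these four angles are disjoint parts of the total angle $2\pi$ around $v_{i+1}$ and the two inequalities cannot hold simultaneously. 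Thus the reflex vertex is pushed forward to $v_{i+2}$, and following the link all the way around forces the starting hull vertex to be reflex, i.e., interior --- a contradiction. Your sketch instead invokes ``a configuration that violates general position along the line through $s$ and that interior vertex,'' which is not the right mechanism and does not establish the forward propagation; without the $2\pi$ angle argument (and the explicit use of the no-chord condition to separate consecutive outer triangles), the walk around the link does not go through.
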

\begin{proof}
Denote the neighbors of $s$ by $v_1,v_2,\ldots , v_k\in S$, for some $k\geq 3$, in counterclockwise order. Since $T$ is not a wheel, some edges of the cycle $(v_1,\ldots , v_k)$ are not on the boundary of $\ch(S)$. Without loss of generality, assume that $v_1$ is a vertex of $\ch(S)$ but $v_1v_2$ is not on the boundary of $\ch(S)$. Note that $Q(v_1v_2)$ is defined, and it has a convex vertex at $v_1$.

\begin{figure}[htpb]
\centering
\includegraphics[width=5in]{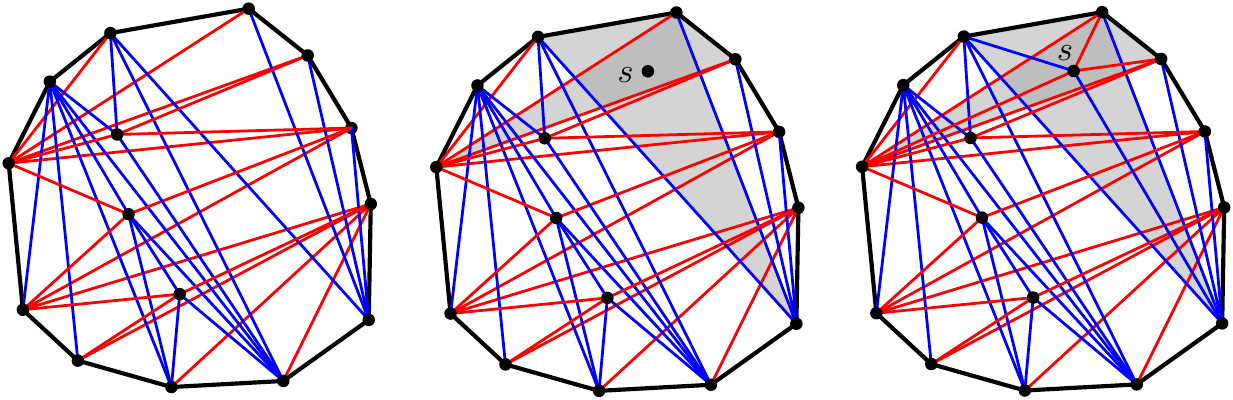}
\caption{Left: $5$-connected biplane graph on 15 points.
Middle: point $s$ lies in the interior of two gray triangles, which jointly have 5 distinct vertices.
 Right: point $s$ is now part of the 5-connected biplane graph.
}\label{pic:5ConnectedGen1}
\end{figure}

Starting with $i=1$ we use the following iterative argument: we know that $Q(v_iv_{i+1})$ is defined and has a convex vertex at $v_i$. If $v_iv_{i+1}$ is flippable, we are done. Otherwise, $Q(v_iv_{i+1})$ is a nonconvex quadrilateral and has a reflex vertex at $v_{i+1}$. It follows that $v_{i+1}$ is in the interior of $\ch(S)$, and thus $Q(v_{i+1}v_{i+2})$ is defined. Since the neighbors of $s$ induce a
cycle, we have $v_iv_{i+2}\not\in E$. Therefore $v_iv_{i+1}$ and $v_{i+1}v_{i+2}$ are not adjacent to a common triangle. Since $v_{i+1}$ is a reflex vertex of $Q(v_iv_{i+1})$, it must be a convex vertex of $Q(v_{i+1}v_{i+2})$. Thus, we can increment the value of $i$ and repeat the same argument. This process ends as soon as we find a flippable edge or when we conclude that none of the edges $v_iv_{i+1}$, $i=1,\ldots, k-1$ is flippable. However, in the latter case, the above argument implies that $v_1$ is in the interior of $\ch(S)$, contradicting our initial assumption.
\end{proof}

\smallskip\noindent{\bf Inserting interior vertices.}
The following lemma allows augmenting a 5-connected biplane graph with an \emph{interior} point.

\begin{lemma}\label{lem:incremental}
Let $G=(S,E)$ be a 5-connected biplane graph.
Denote by $S_{\rm int}\subset S$ the points lying in the interior of $\ch(S)$, and let $s\not\in  S$ be a point such that $s$ is in the interior of $\ch(S)$ but in the exterior of $\ch(S_{\rm int})$. Then a 5-connected biplane graph on $S\cup\{s\}$ can be constructed from $G=(S,E)$ by adding at least 5 new edges incident to $s$ and deleting at most one edge of $E$.
\end{lemma}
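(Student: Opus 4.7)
The plan is to first reduce to the case where $G$ is a maximal biplane graph (since adding edges cannot decrease vertex connectivity) and invoke Lemma~\ref{lem:tri} to obtain two triangulations $T_1=(S,E_1)$, $T_2=(S,E_2)$ with $E=E_1\cup E_2$. Let $\triangle_i\in T_i$ be the unique triangle whose interior contains $s$, and let $V_i$ be its vertex set. A useful preliminary observation is that since $s$ lies outside $\ch(S_{\rm int})$, each $\triangle_i$ must have at least one vertex on $\partial \ch(S)$; otherwise $\triangle_i\subseteq\ch(S_{\rm int})$ would contradict $s\in\triangle_i$. This will let us apply Lemma~\ref{lem:flip+} later.

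The main argument splits on $|V_1\cup V_2|$. When $|V_1\cup V_2|\ge 5$ (the situation depicted in the middle panel of Figure~\ref{pic:5ConnectedGen1}), I would, for each $v\in V_i$, add the edge $sv$ inside $T_i$. Because $s$ lies inside $\triangle_i$, this merely subdivides $\triangle_i$ into three subtriangles, so $T_i':=T_i\cup\{sv:v\in V_i\}$ is still a triangulation of $S\cup\{s\}$. Taking the biplane decomposition $E(T_1')\cupdot(E(T_2')\setminus E(T_1'))$, the augmented graph is biplane, gains $|V_1\cup V_2|\ge 5$ new edges incident to $s$, deletes no edge of $E$, and is 5-connected by Property~\ref{pro:property1+}.

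The more delicate case is $|V_1\cup V_2|=4$, when $\triangle_1$ and $\triangle_2$ share an edge $ab$, so $V_1=\{a,b,c\}$, $V_2=\{a,b,d\}$, the points $c,d$ lie on the same side of $ab$, and $ab\in E_1\cap E_2$. By Lemma~\ref{lem:flip}, $ab$ is not flippable in either $T_i$. I would insert $s$ into $T_1$ via $sa,sb,sc$ to obtain a triangulation $T_1'$ and then apply Lemma~\ref{lem:flip+} with $s$ as the interior vertex (its hypotheses hold: $s$'s neighbours form the 3-cycle $abc$, at least one of them lies on $\partial\ch(S)$, and $T_1'$ is not a wheel since $|S|\ge 14$). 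This yields a flippable edge $e_1\in\{ab,bc,ca\}$ whose flip adds an edge $sz_1$ with $z_1\notin\{a,b,c\}$. Combined with the insertion $sa,sb,sd$ into $T_2$, the resulting biplane graph has $s$ adjacent to $\{a,b,c,d,z_1\}$; at most one edge of $E$ is deleted, namely $e_1$ and only when $e_1\in E_1\setminus E_2$.

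The main obstacle is ensuring $z_1\ne d$. If $e_1=ab$, then $z_1$ is the fourth vertex $c'$ of $Q_{T_1}(ab)$, which lies on the side of $ab$ opposite to both $c$ and $d$, so $z_1\ne d$ automatically. If instead Lemma~\ref{lem:flip+} forces $e_1\in\{bc,ca\}$, one might nevertheless have $z_1=d$, precisely when the triangle $bcd$ or $acd$ already belongs to $T_1$. In that subcase I would run the symmetric construction on $T_2$, producing a flip in $T_2'$ that adds a new neighbour $z_2$ of $s$; a convex-position analysis of the four points $\{s,b,c,d\}$ (or $\{s,a,c,d\}$) shows that the two flips cannot both collapse the neighbour set into $\{a,b,c,d\}$, because a common collapse would require the same four points to admit two incompatible convex-hull orderings. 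Hence at least one of the two sides contributes a fifth distinct neighbour, and Property~\ref{pro:property1+} yields 5-connectivity of the augmented biplane graph.
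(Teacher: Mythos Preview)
Your argument has two genuine gaps.

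\textbf{Missing case $|V_1\cup V_2|=3$.} You split on $|V_1\cup V_2|\ge 5$ and $|V_1\cup V_2|=4$, but you never treat the case $\Delta_1=\Delta_2$. Nothing prevents the two triangulations from having the \emph{same} triangle containing $s$; the paper devotes a full case (Case~3) to this, and it is the hardest one. There one has $s$ adjacent to only three vertices after insertion, and a single flip may raise the degree only to~4. The paper needs two applications of Lemma~\ref{lem:flip+} (or a flip in each layer), together with a separate argument excluding a $3$-cut to rule out a degenerate sub-subcase. None of this is covered by your write-up.

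\textbf{The ``incompatible convex-hull orderings'' step does not work.} In your $|V_1\cup V_2|=4$ case, write $\ch(a,b,c,d)=(a,b,c,d)$ as in the paper. A short check shows that the only way a flip around $s$ in $T_1'$ can produce $z_1=d$ is $e_1=ac$ with $acd\in T_1$; symmetrically, $z_2=c$ forces $e_2=bd$ with $bcd\in T_2$. The two quadrilaterals whose convexity you would need to contradict are then $Q_{T_1'}(ac)$, on vertex set $\{s,a,c,d\}$, and $Q_{T_2'}(bd)$, on vertex set $\{s,b,c,d\}$. These are \emph{different} four-point sets, so there is no single set admitting ``two incompatible convex-hull orderings''; both quadrilaterals can be convex simultaneously, and both flips can fail to yield a fifth neighbour. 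The paper handles exactly this situation differently: it observes that the $4$-cycle $(a,b,c,d)$ then lies in both layers, performs the first flip in $T_1'$ so that $s$ now has the $4$-cycle as its link, and applies Lemma~\ref{lem:flip+} \emph{again} to flip an edge $\hat e$ of that cycle. Since $\hat e$ is retained in $T_2'$, only one edge of $E$ is lost overall and Property~\ref{pro:property2+} applies.
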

\begin{proof}
Augment $G=(S,E)$ to a 5-connected maximal biplane graph $\widehat{G}=(S,\widehat{E})$ by adding dummy edges, if necessary. By Lemma~\ref{lem:tri}, $\widehat{G}$ is the union of two triangulations, $T_1$ and $T_2$. Point $s$ lies in the interior of some triangles $\Delta_1$ and $\Delta_2$ in the two triangulations ($\Delta_1$ and $\Delta_2$ may share vertices and edges). Since $s$ lies in the exterior of $\ch(S_{\rm int})$, at least one vertex of $\Delta_1$ (resp., $\Delta_2$) is on the boundary of $\ch(S)$. Let us augment $T_1$ (resp., $T_2$) with vertex $s$ and three edges joining $s$ to the vertices of $\Delta_1$ (resp., $\Delta_2$) to a new triangulation $T_1'$ (resp., $T_2'$) in which $s$ is adjacent to a vertex of $\ch(S)$. We distinguish three cases based on the total number of distinct vertices of $\Delta_1$ and $\Delta_2$.

\paragraph{Case~1: $\Delta_1$ and $\Delta_2$ jointly have 5 or 6 distinct vertices.} We have joined $s$ to at least 5 distinct vertices of $G$ (Figure~\ref{pic:5ConnectedGen1}). The union of $T_1'$ and $T_2'$ is biplane and 5-connected by Property~\ref{pro:property1+}.

\paragraph{Case~2: $\Delta_1$ and $\Delta_2$ jointly have 4 distinct vertices.}
In this case, $\Delta_1$ and $\Delta_2$ share an edge (see Figure~\ref{pic:5ConnectedGen2}), say $\Delta_1=v_1v_2v_3$ and $\Delta_2=v_1v_2v_4$. Since $s$ is in the interior of both $\Delta_1$ and $\Delta_2$, the points $v_3$ and $v_4$ are on the same side of the line $v_1v_2$. Since $v_4$ is in the exterior of $\Delta_1$ and $v_3$ is in the exterior of $\Delta_2$, the convex hull $\ch(v_1,v_2,v_3,v_4)$ is a convex quadrilateral. Without loss of generality, we may assume  $\ch(v_1,v_2,v_3,v_4)=(v_1,v_2,v_3,v_4)$ in counterclockwise order. By Lemma~\ref{lem:flip+}, $T_1'$ (resp., $T_2'$) has a flippable edge $e_1'$ (resp., $e_2'$) in a triangle opposite to $s$. If flipping edge $e_1'$ in $T_1'$ or edge $e_2'$ in $T_2'$ increases the degree of $s$ to 5, then perform the edge flip. By Property~\ref{pro:property2+}, the union of the two triangulations is a 5-connected biplane graph.

Assume now that neither flipping $e_1'$ in $T_1'$ nor $e_2'$ in $T_2'$ increases the degree of $s$ to 5. This implies that the third vertex of the triangles adjacent to $e_1'$ and $e_2'$, respectively, are $v_4$ and $v_3$. That is, the 4-cycle $(v_1,v_2,v_3,v_4)$ is part of both triangulations $T_1'$ and $T_2'$. After flipping $e_1'$ in $T_1'$, the cycle $(v_1,v_2,v_3,v_4)$ has a flippable edge $\hat{e}$ by Lemma~\ref{lem:flip+}. We can flip $\hat{e}$ in $T_1'$ to increase the degree of $s$ to 5, while retaining edge $\hat{e}$ in the other triangulation $T_2'$. By Property~\ref{pro:property2+}, the union of these two triangulations is a 5-connected biplane graph.

\begin{figure}[htpb]
\centering
\includegraphics[width=5in]{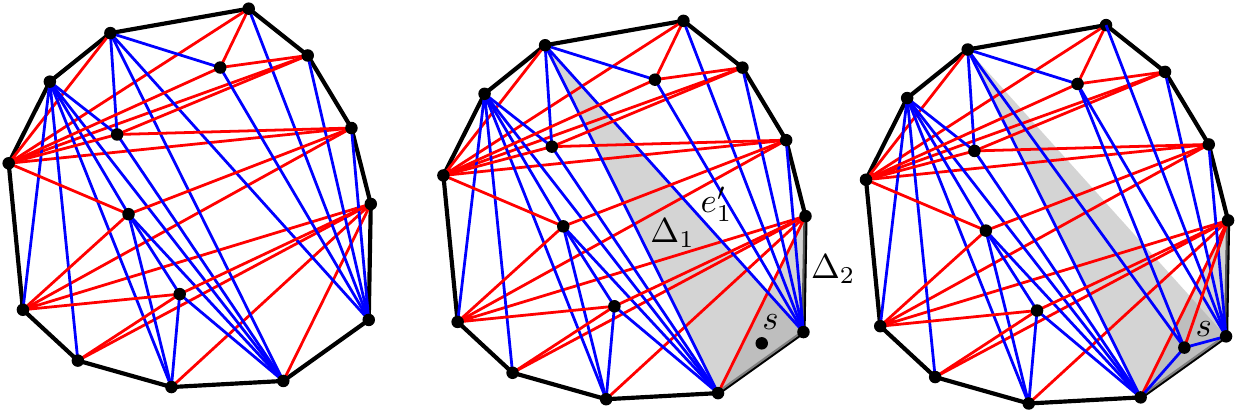}
\caption{Left: $5$-connected biplane graph on 16 points.
Middle: point $s$ lies in the interior of triangles $\Delta_1$ and $\Delta_2$, which share an edge, and $e'_1$ is a flippable edge adjacent to $\Delta_1$.
Right: point $s$ is now part of the 5-connected biplane graph. }\label{pic:5ConnectedGen2}
\end{figure}

\paragraph{Case~3: $\Delta_1$ and $\Delta_2$ jointly have 3 distinct vertices.}
In this case, $\Delta_1=\Delta_2$, say $\Delta_1=\Delta_2=v_1v_2v_3$. By Lemma~\ref{lem:flip+}, $T_1'$ (resp., $T_2'$) has a flippable edge $e_1'$ (resp., $e_2'$) in a triangle opposite to $s$. If $e_1'\neq e_2'$, then we can flip each edge in its corresponding triangulation, while keeping it in the other triangulation. Thus the degree of $s$ increased to 5, and the union of the two triangulations forms a 5-connected biplane graph by Property~\ref{pro:property1+}. If $e_1'= e_2'$ but the two flips together increase the degree of $s$ to 5, then the union of these two triangulations is a 5-connected biplane graph by Property~\ref{pro:property2+} (since the two flips together remove at most one edge $e_1'= e_2'$ from $G$).

It remains to consider the case that $e_1'=e_2'$, say $e_1'=e_2'=v_2v_3$, and the two flips together would only increase the degree of $s$ to 4. That is, $\Delta_1=\Delta_2=v_1v_2v_3$ is adjacent to the same triangle, say $v_2v_3v_4$, in both $T_1$ and $T_2$. In particular, the 4-cycle $(v_1,v_2,v_4,v_3)$ is part of both triangulations. We claim that the 4-cycle $(v_1,v_2,v_4,v_3)$ has no external chords in at least one of $T_1$ and $T_2$. Clearly, the claim is true when $(v_1,v_2,v_4,v_3)$ is a convex quadrilateral. Let us assume to the contrary that $\ch(v_1,v_2,v_3,v_4)$ is a triangle, say $\Delta = v_1v_2v_4$, and suppose that the external chord $v_1v_4$ belongs to both $T_1$ and $T_2$, so $\Delta = v_1v_2v_4$ is part of both triangulations.  If $\Delta = \ch(S)$, then the path $v_4v_3v_1$ separates $v_2$ from the rest of the vertices in $G$, and if $\Delta\neq \ch(S)$, then $v_3$ lies in the interior of $\Delta$ and some point of $S$ lies in its exterior. It follows that either the path $v_4v_3v_1$ or $\Delta$ is a 3-vertex cut in $G$, contradicting our initial assumption that $G$ is 5-connected, and proving the claim. Without loss of generality, we may now assume that the 4-cycle $(v_1,v_2,v_4,v_3)$ has no external chord in $T_1$, and hence in $T_1'$ either.


After flipping $e_1'=v_2v_3$ in $T_1'$, the 4-cycle $(v_1,v_2,v_3,v_4)$ is an induced subgraph in the resulting triangulation $T_1''$, and this cycle has a flippable edge $\hat{e}$ by Lemma~\ref{lem:flip+}. Flipping $\hat{e}$ in $T_1''$ increases the degree of $s$ to 5, while the edge $\hat{e}$ remains part of the triangulation $T_2'$. By Property~\ref{pro:property1+}, the union of these two triangulations is a 5-connected biplane graph. This completes the proof in case~3.

\paragraph{}In all three cases, we have augmented $\widehat{G}=(S,\widehat{E})$ with a new vertex $s$ by adding at least 5 new edges incident to $s$ and deleting at most one edge of $\widehat{E}$. Finally, delete all remaining dummy edges (that have not been flipped in the above procedure). Since the original graph $G$ was 5-connected without the dummy edges, Properties~\ref{pro:property1+} and \ref{pro:property2+} imply that the resulting biplane graph on $S\cup\{s\}$ is also 5-connected.
\end{proof}

\smallskip\paragraph{\large Inserting vertices of the convex hull.}
We now introduce a method to augment a 5-connected biplane graph $G=(S_a,E)$ with a set $S_b$ of points in the exterior of $\ch(S_a)$. We would like the new edges to be disjoint from the interior of $\ch(S_a)$ (although some edge flips will be necessary). For this purpose, we introduce the concept
of \emph{visibility}. We say that a point $s$ in the exterior of $\ch(S_a)$ \emph{sees} an edge $uv$ of $\ch(S_a)$ if the triangle $suv$ is also in the exterior of $\ch(S_a)$. A line segment $st$ in the exterior of $\ch(S_a)$ \emph{sees} $uv$ if both $s$ and $t$ see $uv$. Note that every exterior point $s$ must see a subset of consecutive edges of $\ch(S_a)$, but cannot see all edges of $\ch(S_a)$.

We show below (Lemma~\ref{lem:incrementalBoundary}) that a 5-connected biplane graph $G=(S_a,E)$ can be augmented with a set $S_b$ of exterior points if $S_a$ and $S_b$ satisfy the following property
(see Fig.~\ref{pic:5ConnectedBoundary}).

\begin{property}\label{pro:maxi}
Let $S_a$ and $S_b$ be disjoint point sets such that
\begin{itemize}\itemsep -2pt
\item $|\ch(S_a)|\geq 4$;
\item every point $s\in S_b$ is a vertex of $\ch(S_a\cup S_b)$;
\item if $k$ points in $S_b$ are consecutive vertices of $\ch(S_a\cup S_b)$, then they jointly see at least $k+2$ consecutive edges of $\ch(S_a)$, for all positive integers $k<|\ch(S_a)|$.
\end{itemize}
\end{property}

Property~\ref{pro:maxi} implies a similar property for edges (rather than vertices) under some additional conditions. This will allow the application of Hall's theorem to match the edges of $\ch(S_b)$
to some edges of $\ch(S_a)$.

\begin{lemma}\label{lem:maxi-edge}
Assume that $S_a$ and $S_b$ satisfy Property~\ref{pro:maxi}, $S_a$ lies in the interior of $\ch(S_b)$, and every two consecutive vertices of $\ch(S_b)$ see some common edge of $\ch(S_a)$. Then every $k$ edges of $\ch(S_b)$ jointly see at least $k$ edges of $\ch(S_a)$.
\end{lemma}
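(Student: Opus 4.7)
The lemma is a Hall-type condition on visibility between edges of $\ch(S_b)$ and edges of $\ch(S_a)$, which I would derive from Property~\ref{pro:maxi} (the analogous condition at the vertex level). For each vertex $s_l\in S_b$ let $A_l$ denote the contiguous arc of edges of $\ch(S_a)$ seen by $s_l$; then the edge $b_j=s_js_{j+1}$ sees exactly $B_j=A_j\cap A_{j+1}$, itself a contiguous arc. By a standard convex-geometry fact (tangent lines from an exterior point rotate monotonically as the exterior point moves around a convex polygon), the left and right endpoints $a_l,e_l$ of $A_l$ are non-decreasing as $l$ moves cyclically around $\ch(S_b)$; consequently the endpoints of the arcs $B_j$ are non-decreasing in $j$.

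I would first reduce the general Hall statement to the case of $k$ consecutive edges of $\ch(S_b)$. If the chosen $k$ edges form $c$ maximal consecutive runs of sizes $k_1,\ldots,k_c$, then by the cyclic monotonicity of the $B_j$'s the seen-sets of different runs occupy essentially disjoint cyclic segments of $\ch(S_a)$, separated by the seen-sets of the intervening ``gap'' edges; the total count is then at least $\sum k_i=k$ once each run's contribution is at least $k_i$.

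For the consecutive case $|\bigcup_{l=j}^{j+r-1}B_l|\geq r$, I would induct on $r$. The base $r=1$ is the assumption $B_j\neq\emptyset$. For the inductive step, if $e_{j+r-1}>e_{j+r-2}$ then the edge $e_{j+r-1}$ lies in $B_{j+r-1}$ but not in any earlier $B_l$ (whose right endpoints are bounded by $e_{j+r-2}$), so the union grows by at least $1$ beyond the inductive bound and the induction closes; the symmetric argument uses $a_{j+1}<a_{j+2}$. The only obstruction is the ``doubly flat'' case $a_{j+1}=a_{j+2}$ and $e_{j+r-2}=e_{j+r-1}$: here monotonicity forces $A_{j+1}\subseteq A_{j+2}$ and $A_{j+r-1}\subseteq A_{j+r-2}$, whence $B_{j+1}=A_{j+1}$ and $B_{j+r-2}=A_{j+r-1}$, each of size at least $3$ by Property~\ref{pro:maxi} with $k=1$. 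Then $U_r=\bigcup_{l=j+1}^{j+r-2}B_l$, and combining Property~\ref{pro:maxi} applied to the inner vertices $s_{j+1},\ldots,s_{j+r-1}$ (yielding $|\bigcup A_l|\geq r+1$) with the size lower bounds $|A_l|\geq 3$, together with a case split on whether $A_{j+1}$ and $A_{j+r-1}$ overlap on $\ch(S_a)$, shows $|U_r|\geq r$.

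The main obstacle is the doubly-flat case, where a careful combinatorial accounting of singly-covered edges among the inner arcs $A_{j+1},\ldots,A_{j+r-1}$ is needed to ensure the count cannot fall short by more than $|\bigcup A_l|-r$; this requires leveraging Property~\ref{pro:maxi} applied to several sub-ranges simultaneously with the $|A_l|\geq 3$ constraint. The reduction from arbitrary $J$ to the consecutive case likewise needs care to track the cyclic interaction of seen-sets across distinct runs.
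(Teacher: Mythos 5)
Your setup---the visibility arcs $A_l$, their cyclic monotonicity, the edge-arcs $B_j=A_j\cap A_{j+1}$, and the reduction to consecutive runs---is the right skeleton, and it matches the shape of the paper's argument (which also reduces, via a minimal counterexample and a maximality argument, to the case where both $H_b$ and the set $H_a$ of edges it sees are consecutive). But the proposal stops short of a proof exactly where the content of the lemma lies. Your induction closes only when an endpoint strictly advances; in the ``doubly flat'' case you are left needing $|U_r|=\bigl|\bigcup_{l=j+1}^{j+r-2}B_l\bigr|\ge r$, and the ingredients you list do not suffice as stated. Property~\ref{pro:maxi} gives $\bigl|\bigcup_{l=j+1}^{j+r-1}A_l\bigr|\ge r+1$, but an inner arc $A_l$ may a priori contain ``gap'' edges lying strictly between $B_{l-1}$ and $B_l$, seen by $s_l$ alone; such edges contribute to $\bigl|\bigcup A_l\bigr|$ but to no $B_m$, so the lower bound on the union of the $A$'s does not transfer to $U_r$ without controlling these privately seen edges. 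You acknowledge this (``a careful combinatorial accounting \dots is needed'') but do not supply it, and the reduction from an arbitrary edge set to consecutive runs is likewise asserted rather than proved. As it stands this is a plan with the decisive step open, so it is a genuine gap.

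For comparison, the paper avoids the induction entirely: after reducing to a consecutive set $H_b$ of $k$ edges forming a path $P$, it applies Property~\ref{pro:maxi} once to the $k-1$ interior vertices of $P$, which jointly see at least $k+1$ consecutive edges of $\ch(S_a)$, and then asserts that each of these edges is visible from two \emph{consecutive} vertices of $P$ (two interior ones, or an interior one and an endpoint), hence from an edge of $H_b$, giving $|H_a|\ge k+1$ and the contradiction. That final visibility claim is exactly the statement that interior arcs have no privately seen gap edges---the same point your doubly-flat case must settle. If you establish that claim, you can both close your induction and, more simply, replace it by the paper's one-shot count.
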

\begin{proof}
When $k=1$, the claim holds by our assumption that every two consecutive vertices of $\ch(S_b)$ see some common edge of $\ch(S_a)$. Suppose, to the contrary, that there is a counterexample for some $k>1$. That is, there is a set $H_b$ of $k\geq 2$ edges of $\ch(S_b)$ that jointly only see a set $H_a$ of edges of $\ch(S_a)$ with $|H_a|<k$. Consider a counterexample where $|H_a|$ is minimal. We may assume that $H_b$ is the maximal set of edges of $\ch(S_b)$ that jointly see exactly the edges in $H_a$. If two edges $h_1,h_2\in H_b$ see the same $h\in H_a$, then every edge along $\ch(S_b)$ between $h_1$ and $h_2$ (say, in counterclockwise order) can see only edges of $\ch(S_a)$ that are already visible to $h_1$ or $h_2$. Thus every edge in $H_a$ is visible from a sequence of \emph{consecutive} edges in $H_b$. It is clear that every edge $h\in H_b$ sees a set of consecutive edges of $\ch(S_a)$. Consequently, we may assume that both $H_b$ and $H_a$ consist of consecutive edges (along $\ch(S_b)$ and $\ch(S_a)$, respectively). The $k$ consecutive edges in $H_b$ form a path $P$. By Property~\ref{pro:maxi}, the $k-1$ interior vertices of this path jointly see at least $k+1$ consecutive edges of $\ch(S_a)$. These edges of $\ch(S_a)$ are each visible by at least two (consecutive) vertices of the path $P$: either by two interior vertices or by one interior vertex and an endpoint of $P$. Hence $|H_a|\geq k+1$, contradicting our initial assumption $|H_a|<k$.
\end{proof}

Using the preceding observations, we present our main tool for augmenting a
5-connected biplane graph with exterior points.

\begin{lemma}\label{lem:incrementalBoundary}
Let $S_a$ and $S_b$ be two point sets satisfying Property~\ref{pro:maxi}, and let $G=(S_a,E)$ be a 5-connected biplane graph in $\mathcal{G}_2(S_a)$. Then there exists a 5-connected biplane graph $G'=(S_a\cup S_b,E')$ such that $E\subset E'$.
\end{lemma}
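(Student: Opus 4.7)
The plan is to extend the two triangulations in a maximal biplane decomposition of $G$ to two plane graphs on $S_a\cup S_b$ whose union is 5-connected. First, I would augment $G$ with dummy edges to a maximal 5-connected biplane graph $\widehat{G}=(S_a,\widehat{E})$; by Lemma~\ref{lem:tri}, $\widehat{E}=E_1\cup E_2$ where $T_1=(S_a,E_1)$ and $T_2=(S_a,E_2)$ are triangulations. The overall strategy is to extend each $T_i$ to a plane graph $T_i'$ on $S_a\cup S_b$ by triangulating the region bounded between $\ch(S_a)$ and $\ch(S_a\cup S_b)$ in two different ways, possibly combined with a few flips of visible hull edges. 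At the end, deleting the dummy edges preserves 5-connectivity by repeated application of Properties~\ref{pro:property1+}~and~\ref{pro:property2+}.

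The degree bookkeeping is as follows. I would place the edges of $\ch(S_a\cup S_b)$ in both $T_1'$ and $T_2'$; this already gives every $s\in S_b$ its two neighbors along $\ch(S_a\cup S_b)$. By Property~\ref{pro:maxi} with $k=1$, each $s\in S_b$ sees at least $3$ consecutive edges of $\ch(S_a)$, hence at least $4$ vertices of $\ch(S_a)$, and connecting $s$ to all such visible hull vertices in the exterior triangulations yields up to four further neighbors. Together this typically produces degree at least $5$ at $s$ in the biplane graph. In the degenerate situation where $s$ has no $S_b$ neighbor along $\ch(S_a\cup S_b)$ (for instance, $|S_b|=1$), I would instead flip a visible hull edge $v_iv_{i+1}$ in one of the triangulations, replacing it by an edge $sw$ where $w\in S_a$ is the third vertex of the triangle of $T_i$ previously adjacent to $v_iv_{i+1}$; this supplies the missing fifth neighbor.

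The main obstacle is coordinating the two exterior triangulations so that each $s\in S_b$ has at least five \emph{distinct} neighbors in $T_1'\cup T_2'$, and verifying 5-connectivity globally. To this end, I would invoke a Hall-type argument in the spirit of Lemma~\ref{lem:maxi-edge}, feeding off Property~\ref{pro:maxi}, to fix an injective assignment from edges of $\ch(S_a\cup S_b)$ incident to $S_b$ to distinct visible edges of $\ch(S_a)$; this assignment dictates how $T_1'$ and $T_2'$ fan out differently at each $s$. Finally, by processing the vertices of $S_b$ one at a time in the cyclic order along $\ch(S_a\cup S_b)$ and applying Property~\ref{pro:property1+} at each insertion (or Property~\ref{pro:property2+} when the flip trick is invoked), I would conclude that 5-connectivity is preserved throughout, so that after deleting the dummy edges the resulting biplane graph $G'=(S_a\cup S_b,E')\supseteq(S_a,E)$ is still 5-connected.
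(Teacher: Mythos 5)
Your overall architecture --- pass to a maximal biplane graph, split it into two triangulations, build the new edges in the exterior annulus, use a Hall-type matching driven by Property~\ref{pro:maxi}, and finish with Properties~\ref{pro:property1+} and~\ref{pro:property2+} --- matches the paper's strategy for its \emph{easiest} case, but there is a genuine gap: the Hall condition you want to invoke does not follow from Property~\ref{pro:maxi} alone. Lemma~\ref{lem:maxi-edge} needs two extra hypotheses: that $S_a$ lies in the interior of $\ch(S_b)$, and that every two consecutive vertices of $\ch(S_b)$ see a \emph{common} edge of $\ch(S_a)$. Both can fail. Property~\ref{pro:maxi} with $k=2$ only says that two consecutive points of $S_b$ jointly see at least $4$ consecutive edges of $\ch(S_a)$; they may see disjoint, merely abutting blocks of $3$ edges each, in which case the edge of $\ch(S_a\cup S_b)$ between them sees no edge of $\ch(S_a)$ at all and the $k=1$ Hall condition already fails. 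Likewise, $S_b$ need not surround $S_a$, so the points of $S_b$ form chains rather than a cycle along $\ch(S_a\cup S_b)$; the endpoints of a chain have only one $S_b$-neighbor, and your count of ``two hull neighbors plus the visible vertices'' no longer yields five independent connections into $G$. The paper devotes an entire second case to exactly these configurations: it partitions $S_b$ into maximal \emph{treatable} chains, greedily assigns to each chain vertex a block of consecutive visible edges of $\ch(S_a)$ (arranging for the two chain endpoints to receive extra assigned edges to compensate for the missing $S_b$-neighbor), and only then distributes the connections between the two layers.

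A second, smaller gap is the singleton case. When a lone $s\in S_b$ sees exactly $3$ edges, hence exactly $4$ vertices of $\ch(S_a)$, your fix is to ``flip'' a visible hull edge and join $s$ to the apex $w$ of the adjacent triangle. This keeps the layer plane only if $s$, the two endpoints of that hull edge, and $w$ are in convex position, which is not automatic; the paper must argue that among the distinct triangles of one triangulation adjacent to the visible hull edges, at least one forms a convex quadrilateral with $s$ (using that the extreme edges from $s$ are tangent to $\ch(S_a)$), and it needs a further sub-case, deleting an interior edge and invoking Property~\ref{pro:property2+}, when two of those triangles coincide. Relatedly, ``connect $s$ to all visible hull vertices'' cannot be done simultaneously for all of $S_b$ without crossings, even with two layers at your disposal; the construction has to ration these edges through the assignment, which is precisely why the matching is needed in the first place.
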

\begin{proof}
We may assume, by adding dummy edges if necessary, that $G$ is a 5-connected maximal biplane graph.
We consider two cases depending on whether the conditions of Lemma~\ref{lem:maxi-edge} are satisfied or not.

\begin{figure}[!htb]
\centering
\includegraphics[width=5in]{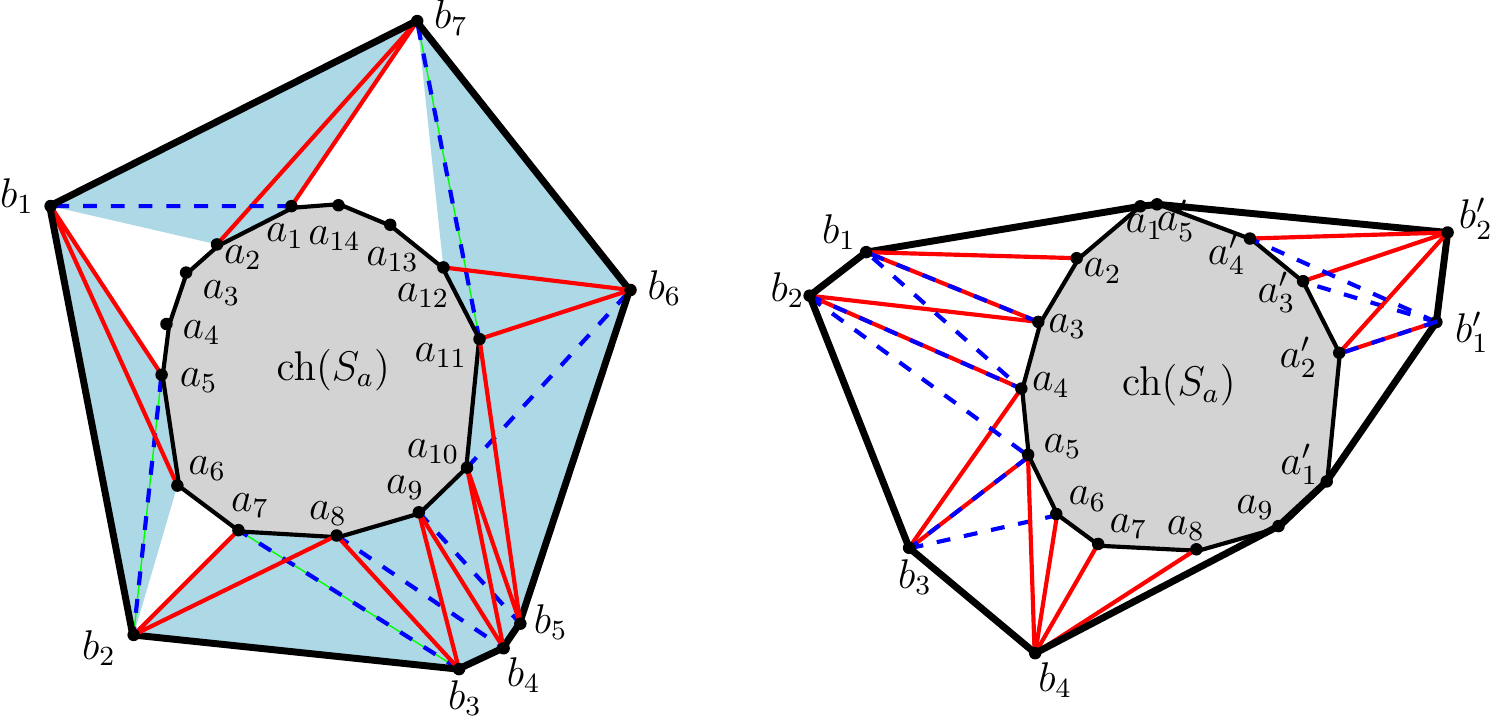}
\caption{Left: the boundaries of $\ch(S_a)$ and $\ch(S_b)$ are disjoint.
Right: the points in $S_b$ are partitioned into two treatable chains.}
 \label{pic:5ConnectedBoundary}
\end{figure}

\paragraph{Case~1: $S_a$ lies in the interior of $\ch(S_b)$ and every two consecutive vertices of $\ch(S_b)$ see some common edge of $\ch(S_a)$.} Denote the vertices of $\ch(S_a)$ by $a_1,\ldots, a_p$ in counterclockwise order, and let $b_1,\ldots ,b_q$ denote the vertices of $\ch(S_b)$ in counterclockwise order. By Lemma~\ref{lem:maxi-edge}, every set of $k$ edges of $\ch(S_b)$ jointly see at least $k$ edges of $\ch(S_a)$. Thus, using Hall's theorem, we can assign every edge of $\ch(S_b)$ to a unique visible edge of $\ch(S_a)$. We have $q\geq 3$, and Property~\ref{pro:maxi} yields $p\geq 5$.
Given an index $i \in \{1,\ldots, q\}$, let $j$ be the index such that the edge $b_ib_{i+1}$ is assigned to $a_ja_{j+1}$. By hypothesis, the quadrilateral $b_ib_{i+1}a_{j+1}a_j$ must be convex. We look for an assignment in which these quadrilaterals have pairwise disjoint interiors. If two such quadrilaterals, say $b_ib_{i+1}a_{j+1}a_j$ and $b_{i'}b_{i'+1}a_{j'+1}a_{j'}$, cross (have intersecting interiors), then $b_ib_{i+1}$ also sees $a_{j'+1}a_{j'}$ and $b_{i'}b_{i'+1}$ also sees $a_{j+1}a_j$, thus we can exchange the edges assigned to $b_ib_{i+1}$ and $b_{i'}b_{i'+1}$, reducing the total number of crossing quadrilaterals by at least one. We can now assume that the edges of $\ch(S_b)$ and the assigned edges of $\ch(S_a)$ form interior-disjoint convex quadrilaterals.

We describe how to augment $G$ with the vertices $b_1,\ldots ,b_q$. In one layer, add all edges of the cycle $(b_1,\ldots ,b_q)$. If edge $b_ib_{i+1}$ is assigned to $a_ja_{j+1}$, then join $b_i$ to $a_j$ and $a_{j+1}$ in one layer, and $b_{i+1}$ to $a_j$ in the other layer (where $a_{p+1}=a_1$ and $b_{q+1}=b_1$). Denote the resulting graph by $G'$ (Figure~\ref{pic:5ConnectedBoundary}, left). All new edges are disjoint from the interior of $\ch(S_a)$, and the edges in each layer are noncrossing, thus $G'$ is biplane. Each $b_i$ is joined to at least three vertices of the cycle $(a_1,\ldots , a_p)$, which is part of the 5-connected graph $G$, and to its two neighbors in the cycle $(b_1,\ldots , b_q)$. In particular, each $b_i$ has vertex-independent paths to five distinct vertices of the 5-connected graph $G$. It follows that $G'$ is 5-connected.

\paragraph{Case~2: $S_a$ does not lie in the interior of $\ch(S_b)$ or two consecutive vertices of $\ch(S_b)$ see disjoint sets of edges of $\ch(S_a)$.} In this case, we partition the vertices $S_b$ into maximal chains of consecutive vertices along $\ch(S_a\cup S_b)$ such that
every two consecutive vertices of a chain see a common edge of $\ch(S_a)$; and then successively augment $G$ with the vertices of the chains. We say that a counterclockwise chain $c=(b_1,\ldots ,b_q)$ along the boundary of $\ch(S_a\cup S_b)$ is \emph{treatable} if $b_i\in S_b$ for $i=1,\ldots, q$ and every edge of $c$ sees some edge of $\ch(S_a)$. A treatable chain $c$ is \emph{maximal} if it is not contained in a longer treatable chain.

Let $c=(b_1,\ldots, b_q)$ be a maximal treatable path, and let $(a_1,\ldots, a_p)$ be the counterclockwise sequence of vertices of $\ch(S_a)$ jointly visible from $c$. Property~\ref{pro:maxi} implies $p\geq q+3$ (in particular, $p\geq 4$, since $b_1$ alone sees at least 3 edges, hence at least 4 vertices of $\ch(S_a)$). We distinguish two subcases depending on the length of $c$.

\paragraph{Case~2(a): $q\geq 2$.} To each $b_i$ ($i=1,\ldots , q$) we assign a sequence of visible edges of $(a_1,\ldots , a_p)$ such that the sequences are disjoint and cover all edges of $(a_1,\ldots , a_p)$. Assign to $b_1$ the edges $a_1a_2$, $a_2a_3$, and any subsequent edge of $(a_1,\ldots , a_p)$ that is \emph{not} visible to $b_2$. For $i=2,\ldots , q$, assign to vertex $b_i$ the counterclockwise first edge of $\ch(S_a)$ that is visible to $b_i$ and has not been assigned to any previous vertex $b_j$, $i<j$; furthermore, assign to $b_i$ any subsequent edge of $(a_1,\ldots , a_p)$ that is not visible to any subsequent vertex $b_j$, $j>i$. (See Figure~\ref{pic:5ConnectedBoundary}, right.) We have assigned at least 2 edges to $b_1$ by construction, and at least one edge to all other vertices $b_i$ ($i=2,\ldots , q$) by Property~\ref{pro:maxi}.
By the maximality of the sequence $b_1,\ldots ,b_q$, every two consecutive vertices see at least one edge of $\ch(S_a)$. Using this fact, observe that $b_i$ sees the first edge assigned to $b_{i+1}$, for $i=1, \ldots , q-1$. Note that at least two edges are assigned to $b_q$. Indeed, if vertices $b_2,\ldots , b_{q-1}$ are each assigned exactly one edge, then $b_q$ is assigned at least 2 edges since $p\geq q+3$.
Otherwise consider the last vertex $b_x$, $1<x<q$, assigned to two or more edges. Then by construction, all edges visible by $b_{x+1},\ldots ,b_q$ are assigned to these vertices. By Property~\ref{pro:maxi}, the vertices $b_{x+1},\ldots ,b_q$ jointly see at least $(q-x)+2$ edges of $\ch(S_a)$. Since each of $b_{x+1},\ldots , b_{q-1}$ is assigned to exactly one edge, $b_q$ is assigned to at least 3 edges.

We can now augment $G$ with the vertices $b_1,\ldots ,b_q$. We distinguish two cases based on the number of edges assigned to $b_q$. Assume first that 3 or more edges are assigned to $b_q$ (refer to   Figure~\ref{pic:5ConnectedBoundary}, right). Then, in one layer, add all edges of the path $(b_1,\ldots ,b_q)$ and join each $b_i$ to the endpoints of all edges assigned to $b_i$. In the second layer, join every vertex $b_i$ ($i=1,\ldots , q-1$) to both endpoints of the first edge assigned to $b_{i+1}$. Denote the resulting graph by $G'$. All new edges are disjoint from the interior of $\ch(S_a)$, and the edges in each layer are noncrossing, and so $G'$ is biplane. Vertices $b_1$ and $b_q$ are each joined to at least 4 consecutive vertices of $(a_1,\ldots , a_p)$, which is part of the 5-connected graph $G$, and $b_1$ and $b_q$ are also joined by the path $(b_1,\ldots , b_q)$. Vertices $b_i$ ($i=2,\ldots , q-1$) are joined to at least 3 vertices of the arc $(a_1,\ldots , a_p)$ (two adjacent vertices in the first layer, and one additional vertex in the second layer) and they each have two neighbors in the path $(b_1,\ldots , b_q)$. It follows that $G'$ is 5-connected.

Assume now that exactly two edges are assigned to $b_q$, namely $a_{p-2}a_{p-1}$ and $a_{p-1}a_p$ (refer to Figure~\ref{pic:5ConnectedBoundary}, right). Thus, the edge $a_{p-3}a_{p-2}$ is assigned to $b_{q-1}$. Now, in one layer, add all edges of the path $(b_1,\ldots ,b_q)$, join each $b_i$ ($i=1,\ldots , q-2$) to the endpoints of all edges assigned to $b_i$, join $b_{q-1}$ to $a_{p-3}$ and join $b_q$ to the endpoints of edges $a_{p-3}a_{p-2}$, $a_{p-2}a_{p-1}$ and $a_{p-1}a_p$. In the second layer, join every vertex $b_i$ ($i=1,\ldots , q-1$) to both endpoints of the first edge assigned to $b_{i+1}$. Similarly to the previous case, the resulting graph is biplane and 5-connected.

\paragraph{Case~2(b): $q=1$.}
If $p\geq 5$, then we augment $G$ with vertex $b_1$ and join it to all visible vertices of $\ch(S_a)$. The resulting graph is biplane, and 5-connected by Property~\ref{pro:property1+}.

Otherwise, $p=4$. We augment $G$ with vertex $b_1$ and join it to $a_1,a_2,a_3,a_4$ and a fifth vertex of $G$ as follows. Since $G$ is 5-connected, $a_2$ is incident to at least 5 edges in $G$, at least 3 of which are interior edges of $\ch(S_a)$. By Lemma~\ref{lem:tri}, $G$ is the union of two triangulations, $T'$ and $T''$. We may assume without loss of generality that $a_2$ is incident to at least 2 interior edges in $T'$. Denote the triangles of $T'$ adjacent to $a_1a_2$, $a_2a_3$ and $a_3a_4$, by $\Delta_1$, $\Delta_2$ and $\Delta_3$, respectively (some of these triangles may be the same). Clearly, $\Delta_1\neq \Delta_3$, because the two triangles involve at least four different vertices, and $\Delta_1\neq \Delta_2$, because $a_2$ is incident to some interior edges. We consider two cases depending on whether $\Delta_2$ and $\Delta_3$ are equal or not. If $\Delta_1$, $\Delta_2$, and $\Delta_3$ are pairwise distinct, then one of them forms a convex quadrilateral with $b_1$, since $a_1b_1$ and $a_4b_1$ are tangent to $\ch(S_a)$. If, say, $\Delta_i=a_ia_{i+1}v_i$, forms a convex quadrilateral with $b_1$, we can join $b_1$ to $v_i$ and the resulting graph is biplane and 5-connected by Property~\ref{pro:property1+}.

Suppose that $\Delta_2=\Delta_3$, that is, $a_2a_3a_4=\Delta_2=\Delta_3$ is a triangle in $T'$. Denote the other triangle adjacent to $a_2a_4$ by $\Delta'=a_2a_4v'$. Since $a_2$ is incident to at least 2 interior edges, we have $\Delta_1\neq \Delta'$. Again, either $\Delta_1$ or $\Delta'$ forms a convex quadrilateral with $b_1$. If $\Delta_1=a_1a_2v_1$ forms a convex quadrilateral with $b_1$, then we join $b_1$ to $v_1$ and if $\Delta'$ forms a convex quadrilateral with $b_1$, then we delete edge $a_2a_4$ and we add edge $b_1v'$. The resulting graph is biplane and 5-connected by Properties~\ref{pro:property1+} or~\ref{pro:property2+}.
\end{proof}

\smallskip\paragraph{\large Proof of Theorem~\ref{theo:5conngeneric}.}

\begin{proof}
Let $S$ be a set of $n$ points in general position, with at least 14 points in convex position. Consider the subsets of $S$ that are in convex position and have maximum cardinality.
Among all these sets, let $S_0$ be one whose convex hull has the largest area. Clearly, we have $|S_0|\geq 14$ and there is a 5-connected graph $G_0\in \mathcal{G}_2(S_0)$ by Theorem~\ref{theo:5connconvex}.

Partition the points of $S\setminus S_0$ into three sets $S_{\rm int}$, $S_{\rm bou}$, and $S_{\rm ext}$ containing the points in the interior of $\ch(S_0)$, those on the boundary of $\ch(S)$, and the remaining points, respectively.
The choice of $S_0$ ensures that $S_a=S_0\cup S_{\rm int}$ and $S_b=S_{\rm bou}$ satisfy Property~\ref{pro:maxi}.

We increment $G_0$ with the vertices in $S_{\rm int}$, $S_{\rm bou}$, and $S_{\rm ext}$ in this order. First, consider the interior points in $S_{\rm int}$. Order the points in $S_{\rm int}$ in lexicographic order (i.e., sort by $x$-coordinate, and break ties by $y$-coordinates). We use Lemma~\ref{lem:incremental} for successively inserting the points in $S_{\rm int}$ into $G_0$,
maintaining a 5-connected biplane graph. Denote by $G_0'$ the resulting 5-connected biplane graph in $\mathcal{G}_2(S_0\cup S_{\rm int})$.

Since $S_0\cup S_{\rm int}$ and $S_{\rm bou}$ satisfy Property~\ref{pro:maxi}, we can use Lemma~\ref{lem:incrementalBoundary} to augment $G_0'$ into a 5-connected biplane graph
$G_0''$ on $S_0\cup S_{\rm int}\cup S_{\rm bou}$.
Finally, we use Lemma~\ref{lem:incremental} again for successively inserting the points in $S_{\rm ext}$ into $G_0''$ in a suitable order. Since $S_{\rm ext}$ lies in the exterior of $\ch(S_0)$, there exists at least one point in $S_{\rm ext}$ that lies on the boundary of $\ch(S_0\cup S_{\rm ext})$. Choose an arbitrary such point, and remove it from $S_{\rm ext}$. Repeat this process until we have removed all points of $S_{\rm ext}$. Let $p_1, \ldots ,p_k$ be the order in which points of $S_{\rm ext}$ are removed. By construction, point $p_i$ belongs to the convex hull of $S_0 \cup \{p_i, \ldots, p_k\}$, for all $i=1,\ldots , k$. Thus, if we successively insert them into $G_0''$ in reverse order they each satisfy the conditions of Lemma~\ref{lem:incremental}.
When all three phases of the algorithm are complete, we obtain a 5-connected biplane graph in $\G$, as claimed.
\end{proof}


\section{Graph Augmentation}\label{sec:augmentation}

In this section we study how the addition of a second layer can improve the vertex connectivity of a plane graph. Suppose we are given a plane graph $G=(S,E)$ in $\mathcal{G}_1(S)$, and wish to augment it with a set $E'$ of new edges to obtain a biplane graph $G=(S,E\cup E')\in \G$ such that the vertex connectivity $\kappa(G)$ is maximal. The more edges $G$ has, the more constrained the problem becomes. In the worst case, we may assume that $G\in \mathcal{G}_1(S)$ is a triangulation, and it is augmented with another triangulation.

In Section~\ref{sec:aug_incr_con} we show how to augment a triangulation with a plane graph in a second layer in order to make it 4-connected, whenever that is possible. In Section~\ref{sec:aug_minimal} we consider the \emph{minimal augmentation} problem, that is,
finding a minimum set of new edges that augments a given plane graph into a $k$-connected biplane graph.

\subsection{Increasing the connectivity of triangulations}
\label{sec:aug_incr_con}
Our first observation is that there are arbitrarily large point sets $S$ and triangulations $T\in \mathcal{G}_1(S)$ such that $T$ cannot be augmented into a 4-connected biplane graph on $S$ by adding a plane graph. These special triangulations are the wheels and fans. Recall that a \emph{wheel} is a triangulation on $n$ points such that $n-1$ points are in convex position and one point (the \emph{center} of the wheel) lies in the interior of $\ch(S)$, the points on the convex hull induce a cycle on the boundary of $\ch(S)$ and the interior point is joined to all other $n-1$ points. A \emph{fan} is a triangulation on $n$ points in convex position, where one point (called \emph{center}) is joined to all other $n-1$ points. Every wheel is 3-connected and every fan is 2-connected. We show below that their connectivity cannot be increased to 4 by augmenting them with a set of pairwise noncrossing edges. Moreover, a fan cannot be augmented into a 4-connected biplane graph, even in the case that the added graph is not plane.

\begin{lemma}\label{lem:fan}
Let $T=(S,E)$ be a fan. Then the vertex-connectivity of every biplane graph $G=(S,E\cup E')$ is at most 3.
\end{lemma}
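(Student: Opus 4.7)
\smallskip
\noindent\textbf{Proof plan.} The strategy is to locate a vertex of small degree in $G$ and invoke the elementary bound $\kappa(G)\le \delta(G)$. Two ingredients would be combined: (a)~the point set $S$ underlying the fan is in convex position, so Lemma~\ref{lem:planar}(i) forces the biplane graph $G$ to be planar as an abstract graph; and (b)~the center $c$ of the fan is already adjacent in $T$ to every other vertex of $S$, and this remains true in $G\supseteq T$.

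From these, the key intermediate claim is that $G-c$ is outerplanar. Starting from any planar embedding of $G$, the $n-1$ edges incident to $c$ subdivide a small disc around $c$ into $n-1$ sectors, one per face of $G$ incident to $c$. Deleting $c$ merges these sectors into a single face of $G-c$, whose boundary closed walk visits every neighbor of $c$ in $G$, hence every vertex of $G-c$ (since $c$ is universal). This gives a planar embedding of $G-c$ in which all vertices lie on a common face, i.e., an outerplanar embedding.

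With $G-c$ outerplanar, I would appeal to the standard fact that outerplanar graphs are $2$-degenerate to obtain a vertex $u$ with $\deg_{G-c}(u)\le 2$. Restoring the single edge $uc$ contributes at most $1$ to $u$'s degree, so $\deg_G(u)\le 3$, and therefore $\kappa(G)\le \deg_G(u)\le 3$, as claimed.

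The only point demanding care is the topological step that the faces of $G$ surrounding the universal vertex $c$ merge into a single face of $G-c$ whose boundary walk covers all of $V(G)\setminus\{c\}$; this is routine planar-topology, but it is precisely what converts the combinatorial hypothesis ``$c$ is universal in $G$'' into outerplanarity of $G-c$. After that, Lemma~\ref{lem:planar}(i) and the well-known $2$-degeneracy of outerplanar graphs close the argument immediately, and the (very small) cases $n\in\{3,4\}$ are covered by $\kappa(G)\le n-1\le 3$.
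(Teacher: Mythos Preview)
Your proof is correct and takes a genuinely different route from the paper's. The paper argues by contradiction: assuming $G$ is 4-connected, it extends $G$ to a maximal biplane graph, decomposes it as a union of two triangulations $T_1,T_2$ via Lemma~\ref{lem:tri}, and then splits into two cases according to whether all fan edges lie in a single $T_i$. In the first case it locates a vertex of degree at most~3 by combining degree counts in the two layers; in the second it exhibits an explicit 3-vertex cut $\{v_0,v_i,v_{i+1}\}$ coming from consecutive fan edges lying in different layers.

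Your argument bypasses the two-triangulation decomposition entirely. You use Lemma~\ref{lem:planar}(i) to get abstract planarity of $G$, then the classical equivalence ``$H$ is outerplanar iff $H$ plus a universal vertex is planar'' applied with the fan center $c$ (which is universal in $G$ since $T\subseteq G$) to conclude that $G-c$ is outerplanar, and finally the 2-degeneracy of outerplanar graphs to find a vertex $u$ with $\deg_G(u)\le 3$. This is shorter and more conceptual; it also makes transparent that the obstruction is purely a degree obstruction rather than a separator. What the paper's approach buys, on the other hand, is that in its second case it actually names a concrete 3-cut, which is a strictly stronger conclusion than $\delta(G)\le 3$ and dovetails with the paper's later use of explicit separators. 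Either argument suffices for the lemma as stated.
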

\begin{proof}
Let $T=(S,E)$ be a fan centered at $v_0$, with $\ch(S)=(v_0,\ldots ,v_n)$. Consider
a biplane graph $G=(S,E\cup E')$ and suppose it is 4-connected. We may assume that $G$ is maximal biplane, and by Lemma~\ref{lem:tri}, it is the union of two triangulations $T_1=(S,E_1)$ and $T_2=(S,E_2)$. Note that the two triangulations share all convex hull edges.

First suppose that all edges of the fan are in one of the two triangulations, say $T_1=T$. Then every vertex $v_i$, except for the center of the fan, has degree at most 3 in $T_1$, because the only adjacent edges to $v_i$ are the convex hull edges $(v_i, v_{i-1})$ and $(v_i,v_{i+1})$, and the chord $(v_i, v_0)$ (if it exists). On the other hand, any triangulation on a convex point set has at least two vertices of degree 2, so the only edges adjacent to these vertices of degree 2 are convex hull edges. Hence, the triangulation $T_2$ has at least two vertices of degree 2. At most one of these vertices may be $v_0$, but the other one must be a vertex of degree at most 3 in $T$. Since the convex hull edges are common to both triangulations, we conclude that at least one vertex in $S$ has degree at most 3 in $G$.

Now assume that not all edges of the fan are in $T_1$. Then the fan has consecutive edges that belong to different triangulations, say $v_0v_i\in E_1$ and $v_0v_{i+1}\in E_2$ for $1<i<n-1$. Then $\{v_0,v_i\}$ is a 2-vertex cut in $T_1$, and $\{v_0,v_{i+1}\}$ is a 2-vertex cut in $T_2$. No edge of $G$ can cross both $v_0v_i\in E_1$ and $v_0v_{i+1}\in E_2$. Therefore $\{v_0,v_i,v_{i+1}\}$ is a 3-vertex cut of $G$.
\end{proof}

\begin{lemma}\label{lem:wheel}
Let $T=(S,E)$ be a wheel.
For every triangulation $T'=(S,E')$, the vertex-connectivity
of the biplane graph $G=(S,E\cup E')$ is at most 3.
\end{lemma}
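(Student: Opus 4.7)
The plan is to pinpoint an explicit 3-vertex cut in $G$. If $T'=T$ the conclusion is immediate because $G=T$ is the wheel, which is exactly 3-connected, so I will assume $T'\neq T$. The first step will be the observation that the wheel $T$ is the only triangulation of $S$ that contains every spoke $v_0v_i$: once all spokes are present, together with the hull edges (which belong to every triangulation of $S$), the interior of $\ch(S)$ is already partitioned into $n-1$ triangles, leaving no room for further diagonals. Hence $T'$ must omit at least one spoke $v_0v_j$.

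Because both endpoints of $v_0v_j$ are vertices of $T'$ but $v_0v_j\notin E(T')$, the segment $v_0v_j$ enters a triangle of $T'$ at $v_0$ and must leave it through an edge; thus some edge $v_av_b\in E(T')$ crosses $v_0v_j$, with $v_a$ and $v_b$ hull vertices lying strictly on opposite sides of the line through $v_0$ and $v_j$. In particular $v_av_b$ is not a hull edge, since hull edges lie on $\partial\ch(S)$ and cannot cross interior chords. This guarantees that both open arcs of $\partial\ch(S)$ between $v_a$ and $v_b$ contain at least one hull vertex.

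The core of the argument will then be to show that $C:=\{v_0,v_a,v_b\}$ disconnects $G$. Let $A_1$ be the arc of hull vertices containing $v_j$ and $A_2$ the opposite arc; both are nonempty by the previous step. Any edge of $G$ with one endpoint in $A_1$ and one in $A_2$ is a segment inside the convex region $\ch(S)$ joining opposite sides of the chord $v_av_b$, so it must cross $v_av_b$. Since $v_av_b\in E(T')$ and $T'$ is plane, no other edge of $T'$ can cross $v_av_b$; hence any such connecting edge must belong to $T$. But $T$ consists only of hull edges (which cannot cross an interior chord) and spokes $v_0v_i$ (all incident to $v_0\in C$). After removing $C$ no edge between $A_1$ and $A_2$ survives, so $C$ is a 3-cut and $\kappa(G)\leq 3$.

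The only delicate point is verifying that both arcs are nonempty, which follows from the fact that $v_av_b$ crosses a spoke in the interior of $\ch(S)$ and therefore cannot be a hull edge; the rest is a clean planarity argument based on the two triangulations sharing all hull edges and $T$'s only non-hull edges being the spokes through $v_0$.
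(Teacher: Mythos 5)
Your proof is correct and follows essentially the same route as the paper: both arguments exhibit the 3-cut consisting of the wheel's center together with the two endpoints of a chord of $T'$, using planarity of $T'$ to show no other edge crosses that chord. The only cosmetic difference is that you locate the chord by following a spoke missing from $T'$, whereas the paper simply takes any edge of $E'\setminus E$ and observes it must join nonconsecutive hull vertices (you also handle the degenerate case $T'=T$ explicitly, which the paper leaves implicit).
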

\begin{proof}
Let $pq$ be an edge in $E'\setminus E$. Necessarily, $p$ and $q$ are nonconsecutive vertices on the boundary of $\ch(S)$ because $T$ is a wheel. Since $T'$ is a plane graph, no edge in $E'$ crosses $pq$, and so $\{p,q\}$ is a 2-vertex cut in $T'$. Every path in $G$ between two vertices on opposite sides of the line $pq$ must pass through $p$, $q$, or the center of the wheel. Hence $G$ has a 3-vertex cut.
\end{proof}

\paragraph{Remark.} Note that in some cases, a wheel $T=(S,E)$ can be augmented into a 4- or even 5-connected biplane graph $G=(S,E\cup E')$, but then some of the new edges in $E'$ must cross each other (that is, $T'=(S,E')$ is not a plane graph). For example, consider a wheel $T=(S,E)$ on $2n+1$ points where $\ch(S)$ is a regular $2n$-gon and the center of the wheel is located at the centroid of the polygon. Then $T$ can be augmented into a 5-connected biplane graph $G=(S,E\cup E')$, where $E'$ consists of two cycles, each of $n$ vertices, formed by the odd and even points of $2n$-gon, respectively.

\paragraph{} Recall that every maximal biplane graph on $n\geq 3$ points is 3-connected (Lemma~\ref{lem:flip}), hence Lemmas~\ref{lem:fan} and~\ref{lem:wheel}  are tight. We next show that wheels and fans are the only exceptions, that is, any other triangulation on $n\geq 4$ vertices can be augmented to a 4-connected biplane graph by adding a plane graph. To that end, we use the following auxiliary lemma.

\begin{lemma}\label{lem:point}
Let $S$ be a set of $n\geq 5$ points in general position such that $u$, $v$, and $w$, are consecutive vertices of $\ch(S)$. Let $T$ be a triangulation on $S$ containing the empty triangles $\Delta = (u,v,w)$ and $\Delta_1 = (u,v',w)$. Then, edge $uw$ is flippable in $T$ and, after flipping $uw$, one of the edges $uv'$ or $v'w$ is also flippable in the new triangulation.
\end{lemma}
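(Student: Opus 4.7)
The plan addresses the two claims in turn. For the flippability of $uw$, I will show that the quadrilateral $Q(uw)=(u,v,w,v')$ is convex. Since $\Delta=(u,v,w)$ and $\Delta_1=(u,v',w)$ are both empty faces of $T$ adjacent to $uw$, the points $v$ and $v'$ lie on opposite sides of line $uw$. It remains to show that $u$ and $w$ lie on opposite sides of line $vv'$, equivalently that segment $vv'$ crosses segment $uw$ in its relative interior. For this I use that the interior angle of $\ch(S)$ at the hull vertex $v$ is bounded by the hull edges $uv$ and $vw$, so the entire hull interior (and in particular $v'$) lies in the closed wedge at $v$ with boundary rays $vu,vw$. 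Then $vv'$ is contained in this wedge, starts at $v\in\Delta$, leaves $\Delta$ because $v'\notin\Delta$, and by general position cannot exit $\Delta$ through the edges $uv$ or $vw$, so it must exit through $uw$.

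Let $T'$ denote the triangulation obtained by flipping $uw$ to $vv'$. Let $x$ (resp.\ $y$) be the third vertex of the triangle of $T$ adjacent to $uv'$ opposite $w$ (resp.\ to $v'w$ opposite $u$); both exist because $v'$ is interior to $\ch(S)$. In $T'$ the relevant quadrilaterals are $Q(uv')=(u,v,v',x)$ and $Q(v'w)=(v',v,w,y)$. I will first argue that, in general position, if either quadrilateral is non-convex then its reflex vertex must be $v'$. The off-diagonal vertices ($v,x$ in $Q(uv')$ and $v,y$ in $Q(v'w)$) each contribute only a single triangle angle, so those interior angles are strictly less than $180^\circ$. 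At the hull vertex $u$, the interior angle of $Q(uv')$ equals the angle swept from ray $uv$ through $uv'$ to $ux$, which lies inside the interior angle of $\ch(S)$ at $u$ and is therefore strictly less than $180^\circ$; the same reasoning applies at $w$ inside $Q(v'w)$.

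Now suppose toward a contradiction that neither $uv'$ nor $v'w$ is flippable in $T'$. By the previous step both quadrilaterals are reflex precisely at $v'$, so
\[
\angle vv'u + \angle uv'x \;>\; 180^\circ \quad\text{and}\quad \angle vv'w + \angle wv'y \;>\; 180^\circ,
\]
the strict inequalities holding because general position excludes the degenerate value $180^\circ$. Since segment $vv'$ crosses $uw$ in its interior (Step~1), the ray $v'v$ lies in the sector at $v'$ corresponding to face $uv'w$, so $\angle vv'u + \angle vv'w = \angle uv'w$. Adding the two displayed inequalities yields
\[
\angle uv'w + \angle uv'x + \angle wv'y \;>\; 360^\circ.
\]
But these three angles are the $v'$-angles of three distinct triangular faces of $T$ incident to $v'$ (namely $uv'w$, $uv'x$, and $v'wy$) whose angular sectors at $v'$ are pairwise disjoint, so their total is at most $360^\circ$, a contradiction. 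Hence at least one of $uv'$ or $v'w$ is flippable in $T'$. The main obstacle is the angular bookkeeping at $v'$: verifying that $v'v$ lies in the sector of face $uv'w$, that the reflex of each quadrilateral cannot sit at the hull vertex $u$ or $w$, and that the three face-sectors at $v'$ listed above are pairwise disjoint (which uses that the three triangles $uv'w, uv'x, v'wy$ are all distinct faces of $T$).
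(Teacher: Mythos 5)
Your Step 1 (flippability of $uw$) is correct and matches the paper's argument. Your treatment of the second claim, however, has a genuine gap: you assert that both $x$ and $y$ exist ``because $v'$ is interior to $\ch(S)$,'' but nothing in the hypotheses forces $v'$ to be an interior point, and in general it need not be. For instance, take $n=5$ points in convex position in cyclic order $v,w,p,q,u$ and triangulate with the diagonals $uw$ and $up$; then $u,v,w$ are consecutive hull vertices, $\Delta_1=(u,w,p)$ so $v'=p$ is a hull vertex, $v'w$ is a hull edge, and the triangle $\Delta_3=(w,y,v')$ does not exist. Your concluding contradiction is obtained by \emph{adding} the two reflexivity inequalities at $v'$, one for $Q(uv')$ and one for $Q(v'w)$, so the argument collapses as soon as one of the two quadrilaterals is undefined. (At most one of them can be undefined, since if both $uv'$ and $v'w$ were hull edges the two empty triangles $\Delta$ and $\Delta_1$ would cover $\ch(S)$, contradicting $n\geq 5$.) The paper treats this boundary case explicitly: when, say, $\Delta_3$ is missing, $v'$ is a hull vertex whose hull angle already bounds the total angle of the faces at $v'$ by $\pi$, so $Q(uv')$ cannot be reflex at $v'$ and is therefore convex. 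Your framework admits the same one-line repair, but as written the case is simply absent.

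For the main case (both $\Delta_2$ and $\Delta_3$ exist), your argument is correct and is a mild variant of the paper's: both proofs reduce the question to the angle at $v'$ (after checking, as you do, that the hull vertices $u$ and $w$ and the off-diagonal vertices cannot be reflex). The paper argues directly via the line through $v$ and $v'$, splitting into cases according to which side $x$ lies on; you instead sum the two putative reflex inequalities at $v'$ and contradict the bound of $2\pi$ on the disjoint face sectors $uv'w$, $uv'x$, $wv'y$ at $v'$. Your accounting is slightly slicker for that case and correctly survives $x=y$, but the omission of the case where $v'$ lies on the convex hull must be fixed before the proof is complete.
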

\begin{proof}
Since $u$, $v$ and $w$ are consecutive vertices of $\ch(S)$, all points in $S$ lie in the closed angular domain $\angle (u,v,w)$. Hence, the points $u$, $v$, $w$ and $v'$ are in convex position, and edge $uw$ is flippable. Replace edge $uw$ with the edge $vv'$ to obtain a new triangulation $T'$.

In $T'$, the triangles $(u,v,v')$ and $(v,w,v')$ are adjacent to edges $uv'$ and $v'w$, respectively. Let $\Delta _2 = (u,x,v')$ and $\Delta _3 = (w,y,v')$ be (if they exist) the other two triangles of $T'$ adjacent to edges $uv'$ and $v'w$, respectively (possibly $x=y$). As $T'$ is a triangulation and $n\ge 5$, at least one of $\Delta_2$ and $\Delta_3$ must exist. Assume first that only one of $\Delta_2$ and $\Delta_3$, say $\Delta_2$, exists. Then $v'$ is a vertex of the convex hull, and so $u$, $v$, $v'$ and $x$ are in convex position. Consequently, edge $uv'$ is flippable. Next, assume that both $\Delta_2$ and $\Delta_3$ exist. Notice that the edges $v'v$, $v'u$, $v'x$, $v'y$, and $v'w$ appear in this circular order around $v'$ (possibly $v'x=v'y$).
Let $\ell$ be the line spanned by $v$ and $v'$. If $x$ and $u$ are in the same halfplane defined by $\ell$, then $(v,v',x,u)$ is a convex quadrilateral and, consequently, edge $uv'$ is flippable. Otherwise, $x$, $y$ and $w$ must be in the same halfplane defined by $\ell$. In this case, $(v,v',w,y)$ is a convex quadrilateral and edge $wv'$ is then flippable.

\end{proof}

We also recall the characterization of 2- and 3-vertex cuts in a triangulation $T=(S,E)$ on $n\geq 5$ vertices. A set $\{u,v\}\subset S$ is a 2-vertex cut in $T$ if and only if $uv$ is a chord of $\ch(S)$. A \emph{separating triangle} of $T$ is a triangle containing a vertex in its interior and another vertex in its exterior. We define a \emph{bichord} of $T$ as a path of length two between two vertices of $\ch(S)$ such that no edge of the path lies on the boundary of $\ch(S)$ and there is a vertex on each side of the path. If the middle vertex of the bichord is an interior vertex, then the bichord splits $S$ into two nonempty sets. If the middle vertex is a vertex on $\ch(S)$, then the bichord is the union of two chords and splits $S$ into three nonempty sets.
A set $\{u,v,w\}\subset S$ is a 3-vertex cut in $T$ if and only if they form a separating triangle, a bichord or some of them form a chord.

In all the cases (2- and 3-vertex cuts), $S$ is split into several nonempty subsets by a chord, a separating triangle or a bichord, such that no edge of $T$ connects vertices in distinct subsets. When we augment a triangulation $T$ into a 4-connected biplane graph $G$, we must add edges connecting vertices belonging to these subsets. For example, a separating triangle consisting of three chords splits $S$ into four nonempty subsets. In this case, if we remove the triangle, then at least three new edges crossing the chords must be added to reconnect the graph. Observe that the edges of this triangle also form three bichords.

We say that an edge \emph{properly} crosses a bichord or a separating triangle when the edge intersects it in exactly one internal point (a chord is always crossed properly). We wish to augment a triangulation $T=(S,E)$ on $n\geq 5$ vertices into a 4-connected biplane graph $G=(S,E\cup E')$ by adding a plane graph $T'=(S,E')$ (not necessarily a triangulation). Since $T$ is 2-connected, any 2- or 3-vertex cut of an augmentation $G=(S,E\cup E')$ is a 2- or 3-vertex cut of $T$. Therefore, it is not difficult to check that $G$ is 4-connected if and only if (i) every separating triangle and every bichord of $T$ are properly crossed by at least one new edge, (ii) every chord of $T$ is properly crossed by at least two new edges, and (iii) when $S$ contains at least two points on each side of a chord of $T$, then the chord is crossed by at least two nonadjacent new edges (otherwise the chord and a common endpoint of crossing edges would be a 3-vertex cut).

We are now ready to augment any triangulation, other than the wheel or the fan,
by a plane graph to a 4-connected biplane graph.

\begin{theorem}\label{theo:augment}
Let $S$ be a set of $n\ge 6$ points in convex position or a set of $n\ge 5$ points not in convex position. For every triangulation $T=(S,E)$ other than the wheel and the fan, there is a plane graph $T'=(S,E')$ such that $G=(S,E\cup E')$ is 4-connected.
\end{theorem}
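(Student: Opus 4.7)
The plan is to construct the augmenting plane graph $T'=(S,E')$ explicitly so that the combined graph $G=(S,E\cup E')$ satisfies the three crossing conditions (i)--(iii) stated in the paragraph just before the theorem. Recall that these require every separating triangle and every bichord of $T$ to be properly crossed by at least one edge of $E'$, every chord of $T$ to be crossed by at least two edges of $E'$, and every chord of $T$ with two or more vertices on each side to be crossed by two \emph{non-adjacent} edges of $E'$. The natural choice is to build $T'$ as a second triangulation of $S$; it is plane by construction, and we then need only verify the crossing conditions between $T$ and $T'$.

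I would split the argument into two cases based on whether $S$ has interior vertices. In the convex case ($n\ge 6$ and $T$ not a fan), there are no interior vertices and hence no separating triangles or bichords; the only constraint is non-adjacent double-crossing of each chord of $T$. I would construct $T'$ by starting at an ear $\Delta=(u,v,w)$ of $T$ (adjacent to some triangle $\Delta_1=(u,v',w)$) and iteratively applying Lemma~\ref{lem:point}: after flipping $uw$, a neighbouring edge also becomes flippable, and this pair of consecutive flips provides enough local freedom both to avoid reusing chords of $T$ and to place the two crossings of a chord at vertices that are not shared. The hypothesis that $T$ is not a fan guarantees that no single vertex is incident to every chord, so the ``spreading'' needed for non-adjacency is always achievable.

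In the general case ($n\ge 5$ and $T$ not a wheel), I would again produce $T'$ as a second triangulation, obtained by a careful flip sequence from $T$. For each separating triangle the flip sequence must introduce an edge spanning its interior and its exterior; for each bichord, a transversal edge; for each chord of $\ch(S)$, a non-adjacent pair of crossings. Because $T$ is not a wheel, no interior vertex is adjacent to every boundary vertex, and hence ears as in Lemma~\ref{lem:point} exist in enough places to resolve each constraint locally. I would process the cut structures one at a time, maintaining the invariant that previously resolved cuts remain properly crossed.

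The main obstacle is condition~(iii): guaranteeing that the two crossings of a chord are non-adjacent, since they may compete with crossings demanded by nearby cuts, or the two flipped edges may share a common endpoint. The principal technical tool for dealing with this is Lemma~\ref{lem:point}, whose pair of consecutive flips is exactly what is needed to place two crossings of the same chord at four distinct vertices, and to thread $T'$ through several cuts simultaneously.
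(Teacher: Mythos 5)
Your setup is right: reducing the theorem to conditions (i)--(iii) and identifying Lemma~\ref{lem:point} as the key flipping tool both match the paper. But the proposal never actually constructs $T'$; every place where the real difficulty sits is covered by an assertion rather than an argument. ``This pair of consecutive flips provides enough local freedom,'' ``ears exist in enough places to resolve each constraint locally,'' and ``process the cut structures one at a time, maintaining the invariant'' are exactly the claims that need proof: the new edges must be pairwise noncrossing, so an edge chosen to cross one cut can block every candidate edge for a neighbouring cut, and nothing in your outline explains why a globally consistent choice exists. Two of your supporting structural claims are also false. First, a non-wheel triangulation can perfectly well have an interior vertex adjacent to every hull vertex (put a second interior point inside one of its incident triangles). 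Second, Lemma~\ref{lem:point} needs an ear --- three consecutive hull vertices $u,v,w$ with $(u,v,w)$ an empty triangle of the triangulation --- and a triangulation need not have any ear at all (a $4$-connected triangulation has none), so you cannot count on applying the lemma ``in enough places'' inside $T$ itself.

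The paper's proof is organized quite differently, and the difference is where the content lives. It splits on whether $T$ is $3$-connected, not on convex position. In the $3$-connected case there are no chords, $T'$ need not be a triangulation, and no flips are used: either some vertex avoids all $3$-cuts and a star centered there works, or one shows $T$ has no separating triangle, analyzes the union of all bichords (a union of stars at interior vertices), and writes down an explicit plane graph crossing every bichord. In the non-$3$-connected case the proof is an induction on $n$ over the dual tree of the chord decomposition: a leaf cell is removed, induction gives a $4$-connected augmentation $T_1\cup T_2$ of the smaller triangulation, and Lemma~\ref{lem:point} is applied to $T_2$ \emph{augmented by the triangle glued across the chord} --- which is precisely how the ear hypothesis of the lemma is guaranteed --- to reinsert the removed vertices while Properties~\ref{pro:property1+} and~\ref{pro:property2+} preserve $4$-connectivity; the subcases where the reduced triangulation degenerates to a wheel or fan get separate explicit constructions. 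To repair your proposal you would need to supply an organizing structure of this kind; as written, the argument has a genuine gap at its center.
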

\begin{proof}
We consider two cases depending on the vertex connectivity of $T$.

\paragraph{Case~1: $T$ is 3-connected.} In this case, since there are no chords, it is enough augment $T$ such that every separating triangle and every bichord of $T$ is properly crossed by at least one new edge.
If there is a vertex $v$ that does not belong to any 3-vertex cut, then augment $T$
with a star $T'=(S,E')$ centered at $v$. Then in $(S,E\cup E')$, every separating
triangle and bichord is properly crossed by an edge incident to $v$.

Assume now that every vertex in $S$ is part of a 3-vertex cut of $T$.
In this case, we start by showing that $T$ has no separating triangle. Suppose, to the contrary, that there is a separating triangle in $T$. Let $\Delta$ be a separating triangle that contains the minimum number of vertices in its interior, and let $v$ be a vertex in the interior of $\Delta$. Then $v$ cannot be part of any separating triangle by the minimality of $\Delta$. If $v$ is part of a bichord, then the two endpoints of the bichord must be vertices of $\Delta$, and so an edge of $\Delta$ is a chord of $\ch(S)$, contradicting our assumption that $T$ is 3-connected. We conclude that $T$ has no separating triangle.

\begin{figure}[!htb]
\centering
\includegraphics[scale=0.6]{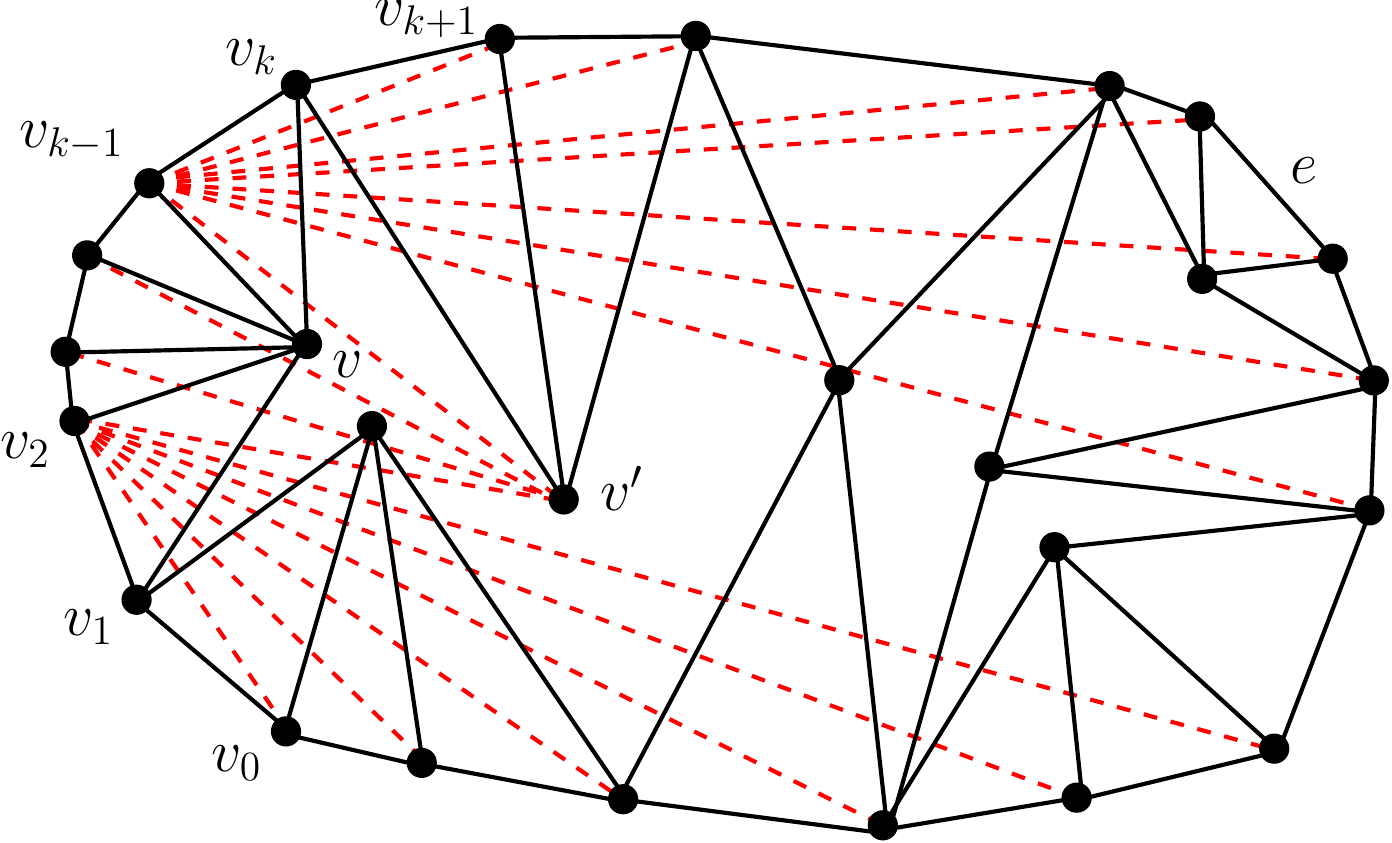}
\caption{A spanning subgraph $H$ formed by the union of bichords. All new edges (dashed edges) are incident to $v_2$, $v_{k-1}$ or $v'$.}\label{pic:general3to4}
\end{figure}

We are left with the case in which every 3-vertex cut is a bichord and every point belongs to some bichord. Denote by $H$ the subgraph of $T$ formed by the union of all bichords. The graph $H$ is the union of stars centered at interior points (each star is the union of all bichords with a common middle vertex). Refer to Figure~\ref{pic:general3to4}. Each star divides $\ch(S)$ into several \emph{sectors}. There are at least two interior points, since $T$ is not a wheel. Let $e$ be an arbitrary edge of $\ch(S)$, and let $v$ be an interior vertex such that the sector of $v$ containing $e$ has maximal area. The maximality implies that no other sector of $v$ contains any bichord.  Denote by $v_1, \ldots , v_k$ the neighbors of $v$ in $H$ in clockwise order such that all other interior vertices of $H$ lie in the sector bounded by $vv_1$ and $vv_k$.
Note that $k\geq 4$, since if $k\in \{2,3\}$, then $v_2$ cannot belong to a bichord, contradicting the assumption that every vertex of $G$ is part of some bichord.

We can now define a plane graph $T'$ as follows. Connect $v_2, \ldots , v_{k-1}$ to an arbitrary interior vertex $v'$, $v'\neq v$. Partition the sector bounded by $v'v_2$ and $v'v_{k-1}$ that contains $v_k$ into two convex parts along the angle bisector of $\angle(v_2,v',v_{k-1})$. Connect $v_2$ and $v_{k-1}$, respectively, to all vertices of $\ch(S)$ that lie in the same convex part of the sector (see Figure~\ref{pic:general3to4}). If the angle bisector hits a vertex $w$ of $\ch(S)$ then $w$ can be connected to either $v_2$ or $v_{k-1}$.
The graph $T'$ contains an edge properly crossing every bichord. Therefore $G=(S,E\cup E')$ is 4-connected, as required.

\begin{figure}[!htb]
\centering
\includegraphics[width=0.9\textwidth]{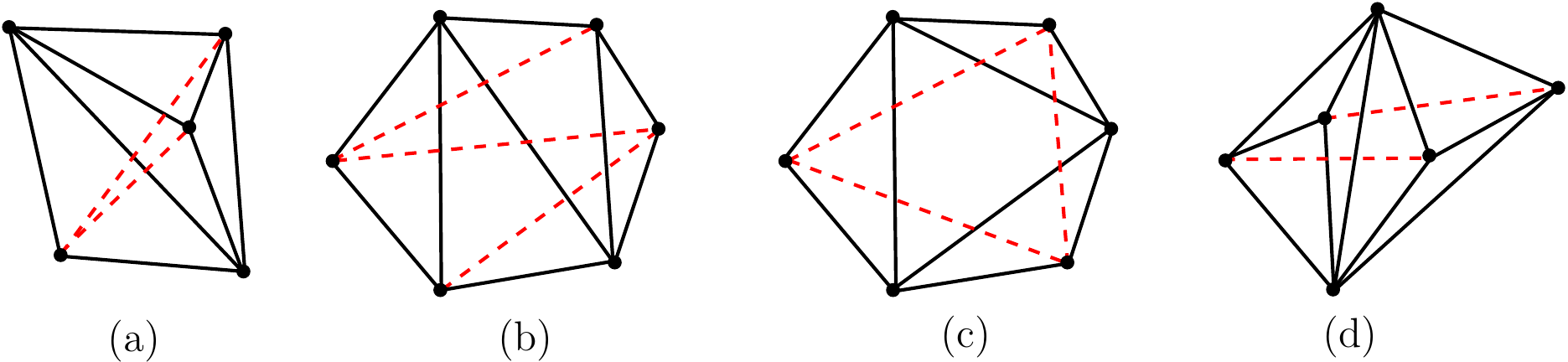}
\caption{(a--c) Base cases when $T$ in not 3-connected. Dashed edges indicate edges in $E'$. (d) A triangulation of 6 points with two interior points on opposite sides of a chord.
}\label{pic:basetheorem3}
\end{figure}

\paragraph{Case~2: $T$ is not 3-connected.}
Since every triangulation on $n\geq 3$ vertices is 2-connected, then $T$ must have a 2-vertex cut. Each 2-vertex cut $\{u,v\}$ corresponds to a chord $uv\in E$ of $\ch(S)$. The chords of $\ch(S)$ decompose $\ch(S)$ into convex regions, called \emph{cells}. We define a dual graph $H$ on the cells as follows: the nodes of $H$ correspond to the cells, and two nodes are joined by an edge in $H$ if and only if the corresponding cells share a chord. Clearly, $H$ is a tree. The cells corresponding to the leaves of $H$ are called \emph{leaf cells}. For a leaf $\ell$ of $H$, denote by $S_\ell\subset S$ the set of vertices that lie in the interior or on the boundary of the leaf cell $\ell$ (including the endpoints of the chord defining the leaf). The \emph{size} of a leaf cell $\ell$ is the cardinality of $S_\ell$. Note that the subgraph $T_\ell$ of $T$ induced by the vertices in $S_\ell$ for a leaf cell $\ell$, is either a triangle (if $|S_\ell|=3$) or a 3-connected triangulation (if $|S_\ell|\geq 4$).
We proceed by induction on $n$, the total number of vertices.

\paragraph{Base cases.}
Suppose $n=5$ and the points are not in convex position. Since $T$ has a chord, $\ch(S)$ has 4 vertices, and there  is one interior point (Figure~\ref{pic:basetheorem3} (a)). Since $\ch(S)$ is a quadrilateral, there is exactly one chord, which determines two leaf cells. The vertices in the two cells induce $K_3$ and $K_4$, respectively,
with 1 and 2 vertices disjoint from the chord. Let $E'$ contain the two edges between the vertices on opposite sides of the chord. Then $G=(S,E\cup E')$ is isomorphic to $K_5$, which is 4-connected.

Suppose $n=6$ and the points are in convex position. Every triangulation $T$, other than a fan, is composed of three chords of $\ch(S)$ forming a triangle or a path. In both cases, there is a set $E'$ of 3 noncrossing edges such that $(S,E\cup E')$ is 4-connected, as indicated in Figures~\ref{pic:basetheorem3}(b) and~6(c).

\paragraph{Induction step.} To prove the induction step we will distinguish two cases, based on whether a leaf cell of size 3 exists or not.

\paragraph{Case~2.1: There is a leaf cell $L$ of size 3.}
First, assume $n\ge 6$ ($n\ge 7$ for points in convex position) and suppose that a leaf cell of size 3 exists, i.e. there is a triangle $\Delta = (u,v,w)$ in $T$, with $uw$ being a chord and $v$ being the only vertex (on the boundary of $\ch(S)$) to the left of $\overrightarrow{uw}$. By removing $v$ from $T$, we obtain a triangulation $T_1$ on the $n-1$ remaining points. Three subcases arise depending on whether $T_1$ is a wheel, a fan or neither.

If $T_1$ is a wheel, then let $T'=(S,E')$ be a star centered at $v$. The biplane graph obtained $(S,E\cup E')$ is 4-connected because all bichords of $T$ (which are the bichords of $T_1)$ are properly crossed by $T'$, and the only chord $uw$ is crossed by at least two edges of $T'$.

If $T_1$ is a fan, then, instead of removing $v$, we can remove a different vertex of degree 2 from $T$, obtaining a new 2-connected triangulation other than a fan (otherwise $T$ would be a fan).

\begin{figure}[!htb]
\centering
\includegraphics[width=0.85\textwidth]{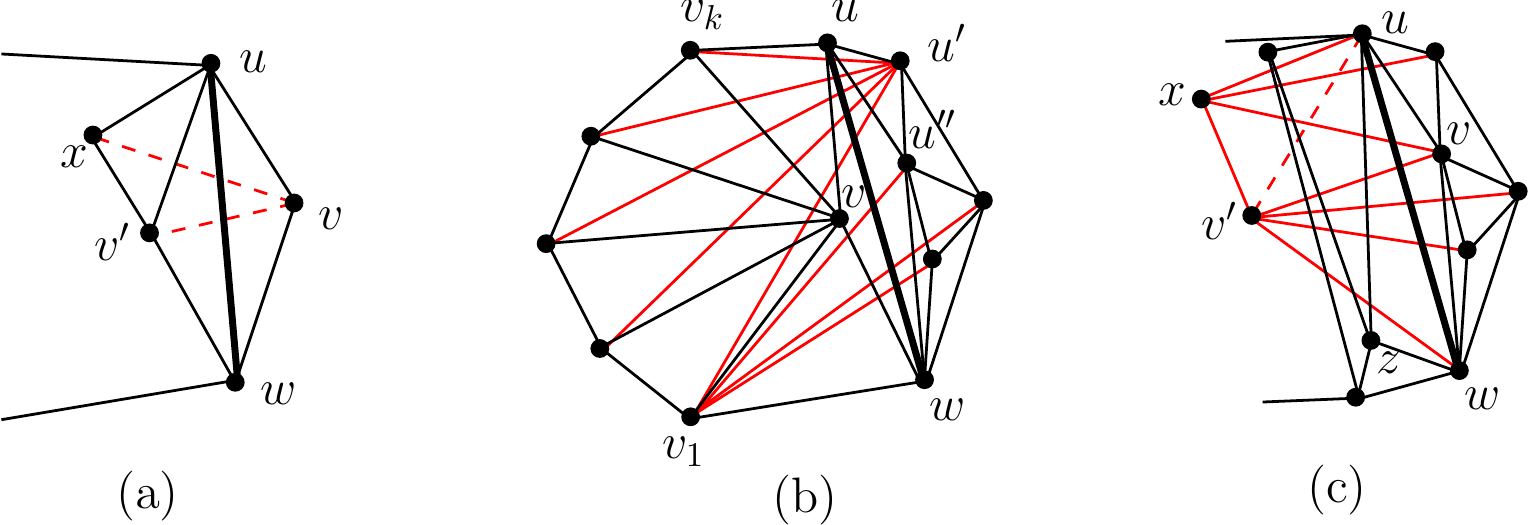}
\caption{Augmenting 2-connected triangulations to 4-connected biplane graphs by crossing chords.
Chord $uw$ is shown as a bold segment.}\label{pic:theorem3}
\end{figure}

Otherwise, $T_1$ is a 3-connected triangulation other than a wheel or a 2-connected triangulation other than a fan. In this case, by induction, there is a plane graph $T_2$ on the $n-1$ remaining points such that $T_1\cup T_2$ is biplane and 4-connected. We may assume that $T_2$ is a triangulation.

Now, we construct a triangulation from $T_2$ by adding vertex $v$ and some incident edges, and possibly deleting some of the edges of $T_2$, as explained next. Refer to Figure~\ref{pic:theorem3}(a). Let $\Delta_1 = (u, w, v')$ be the triangle of $T_2$ adjacent to $uw$ and let $\Delta_2 = (u,x,v')$ and $\Delta_3 = (w,y,v')$ be (if they exist) the two triangles of $T_2$ adjacent to triangle $\Delta_1$.  Note that at least one of them must exist. First, we add $\Delta$ to $T_2$, obtaining a triangulation $T'_2$ on the $n$ points. By Lemma \ref{lem:point}, we can flip edge $uw$ in $T'_2$ and, after flipping it, we can flip one of the edges $uv'$ and $v'w$.
Assume without loss of generality that we flip $uv'$.
Then, from $T'_2$, we remove edges $uw$ and $uv'$ and we add edges $vv'$ and $vx$, obtaining a new triangulation $T'$ on the $n$ points.
By Property~\ref{pro:property2+}, $T\cup T'$ is 4-connected. Indeed, we have obtained $T\cup T'$ from  the 4-connected graph $T_1\cup T_2$ by removing at most one edge, $uv'$ (since edge $uw$ is always in both $T_1$ and $T_2$) and adding a new vertex $v$ joined to $u$, $v'$ and two additional vertices, $x$ and $w$. This completes the proof of Case~2.1.

\paragraph{Case~2.2: There is no leaf cell of size 3.}
Our proof in this case is similar to Case~2.1, but the constructions are now a bit more complicated.
Since $E$ contains at least one chord of $\ch(S)$, the convex hull has at least 4 vertices and there are at least two leaf cells.
Since no leaf cell is a triangle, there are at least two points in the interior of $\ch(S)$.

Assume first that $n=6$. In this case, there are exactly four hull vertices, hence there is a unique chord, and exactly one interior vertex on each side of the chord (Figure~\ref{pic:basetheorem3}(d)). Consider the four vertices disjoint from the chord, on each side: they admit two disjoint edges between vertices on opposite sides of the chord. These edges augment $T$ into a 4-connected biplane graph, as required.

Assume now $n\ge 7$. Let $\ell$ be a leaf cell of minimal size (i.e., with minimum number of vertices). This leaf cell is defined by a chord $uw$, and its size $|S_\ell|$ is at least 4. If we remove the points in $S_\ell$ except for $u$ and $w$, then all vertices of the second smallest leaf cell survive, and so we are left with at least $|S_\ell| \geq  4$ vertices. We obtain a triangulation $T_1$ on the $n-|S_\ell|+2$ remaining points. Note that, if $|S_\ell| = 4$, then $n-|S_\ell|+2 \ge 5$ because $n\ge 7$, and if $|S_\ell| \ge 5$, then $n-|S_\ell|+2 \ge |S_\ell| \ge 5$. Therefore, if $T_1$ is neither a wheel nor a fan, we can apply induction to the triangulation $T_1$.

First, observe that $T_1$ cannot be a fan, otherwise it would have two leaves of size 3, and one of them would be a leaf cell of size 3 in $T$, contradicting the assumption that there is no such leaf cell. Assume that $T_1$ is a wheel. Refer to Figure~\ref{pic:theorem3}(b). Let $v$ denote the center of the wheel $T_1$, and let the neighbors of $v$ be denoted $w, v_1, \ldots, v_k, u$ in clockwise order. Note that $k \geq 2$ because $n-|S_\ell|+2 \ge 5$ and so the degree of $v$ is at least 4. Let $h$ be the line through $v_1$ and $u$, and let $u'\ne w$ be the first point in $S_\ell$ hit when we rotate $h$ clockwise about $v_1$. Now, let $T'$ contain edges connecting $u'$ to $v_1, \ldots , v_k$; and $v_1$ to all the vertices in $S_\ell\setminus \{u, w\}$. Since $T_\ell$ (the graph induced by $S_\ell$ in $T$) is 3-connected, then by connecting each vertex in $S_\ell$ (except for $u$ and $w$) to a vertex outside of $S_\ell$, each separating triangle and bichord of $T_\ell$ is properly crossed. In addition, every bichord incident to $v$ is properly crossed by an edge of type $u'v_j$. Finally, the only chord of $T$, the chord $uw$, is properly crossed by two new edges, for example, edges $v_1u''$ and $v_ku'$ (where $u''$ is an arbitrary vertex of $\ell$ other than $u$, $w$ and $u'$, which must exist because $|S_\ell|\ge 4$). Therefore, $(S,E\cup E')$ is 4-connected.

We can now assume that $T_1$ is a 3-connected triangulation other than a wheel or a 2-connected triangulation other than a fan. By induction, there is a plane graph $T_2$ on the $n-|S_\ell|+2$ remaining points such that $T_1\cup T_2$ is a 4-connected biplane graph. We may assume that $T_2$ is a triangulation. We modify $T_2$ to construct a new plane graph $T'$ on all $n$ points as follows (see Figure \ref{pic:theorem3}(c)). Similarly to Case~2.1, let $\Delta_1 = (u, w, v')$ be the triangle of $T_2$ adjacent to the chord $uw$ and let $\Delta_2 = (u,x,v')$ and $\Delta_3 = (w,y,v')$ be (if they exist) the two triangles of $T_2$ adjacent to triangle $\Delta_1$. Note that at least one of them must exist. Let $\Delta = (u,w,v)$ be the triangle of $T_\ell$ adjacent to edge $uw$. Vertex $v$ must be an interior vertex because $T_\ell$ is 3-connected. By adding $\Delta$ to $T_2$, we obtain a new triangulation $T'_2$ and, again, by Lemma \ref{lem:point}, we can flip edge $uw$ in $T'_2$. We can then flip one of the edges $uv'$ and $v'w$. Assume without loss of generality that we flip $uv'$. We construct a plane graph $T'$ from $T_2$ on the $n$ points by removing edges $uw$ and $uv'$, adding edges $xv$ and $v'v$, and connecting all remaining points in $S_\ell$ to one of $x$ and $v'$, depending on which side of the angle bisector of $\angle(x,v,v')$ they lie on. We have added an edge of type $v'v''$ or $xv''$ adjacent to each vertex $v''$ in $S_\ell$, where $v''\not\in \{u,w\}$.

We claim that $T\cup T'$ is 4-connected. We need to show that every chord, every bichord and every separating triangle of $T$ is properly crossed. By induction, $T_1\cup T_2$ is 4-connected. In the case that edge $uv'$ is removed from the 4-connected graph $T_1 \cup T_2$, the path $(u,v,v')$ establishes a new connection from $u$ to $v'$. Note that, since triangle $\Delta_1$ is empty, any chord properly crossed by edge $uv'$ is also properly crossed by  edge $v'v$. Consequently, every separating triangle, every bichord and every chord of $T_1$, is properly crossed at least once or twice, as required. However, it is possible that a bichord of $T$ consists of two edges of $T_1$ but it is not a bichord in $T_1$. One possibility for such a bichord is $(u,z,w)$, where $\Delta_0=(u,w,z)$ is the triangle of $T_1$ adjacent to the chord $uw$ and $z$ is an interior point (possibly, $z=v'$). See Figure~\ref{pic:theorem3}(c). This bichord (if it exists) is properly crossed by the edge $xv$. The other possibility is that edge $uw$ belongs to a bichord of $T$ consisting of two chords, say $wu$ and $uw'$. In this case, since there are at least two points on each side of $uw'$ (there are no leaf cells of size 3 on one of the sides, and $w$ and $z$ are on the other side), by induction, there are at least two disjoint edges crossing $uw'$, one of them properly crossing the bichord. Moreover, every separating triangle and every bichord in $T_\ell$ is properly crossed by an edge of type $v'v''$ or $xv''$, where $v''$ is in $S_\ell$. Finally, we show that the chord $uw$ is properly crossed by two disjoint edges. For an arbitrary vertex $v''$ in $S_\ell\setminus \{u,v,w\}$, if $v''$ is connected to $x$, then both $xv''$ and $v'v$ cross $uw$, and if $v''$ is connected to $v'$, then both $xv$ and $v'v''$ cross $uw$.
\end{proof}

We have seen (Lemmas~\ref{lem:fan} and~\ref{lem:wheel}) that a fan and a wheel, respectively, are 2- and 3-connected triangulations that cannot be augmented to a 4-connected biplane graph by adding a second triangulation. We now show that there are 4-connected triangulations that cannot be augmented to 5-connected biplane graphs by adding a second triangulation.

\begin{theorem}\label{lem_no5con}
There exist arbitrarily large point sets $S$ and 4-connected triangulations $T=(S,E)$ such that
for every triangulation $T'=(S,E')$, the biplane graph $(S,E\cup E')$ is not 5-connected.
\end{theorem}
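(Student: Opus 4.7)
The plan is to exhibit, for arbitrarily large $n$, an explicit point set $S_n$ with a 4-connected triangulation $T_n=(S_n,E_n)$ such that $T_n\cup T'_n$ contains a 4-vertex cut for every triangulation $T'_n$ of $S_n$. The guiding idea is to design $T_n$ with several ``rigid'' 4-cuts whose joint repair would force $T'_n$ to contain too many pairwise-crossing edges to remain a plane graph.

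A natural candidate construction is a two-layer point set with $n=2m$: let the outer layer consist of $m$ points on a large regular $m$-gon (which then equals $\ch(S_n)$) and the inner layer of $m$ points on a much smaller concentric regular $m$-gon rotated by $\pi/m$. Take $T_n$ to consist of the outer and inner polygon cycles, the antiprism zig-zag edges $t_ib_i$ and $t_ib_{i+1}$ between the two layers, and a carefully chosen triangulation of the inner $m$-gon that keeps minimum degree at least $4$. Four-connectivity of $T_n$ then follows from a direct degree-and-cut inspection of the annular necklace structure. The key observation is that, since the outer $m$-gon is $\ch(S_n)$, every triangulation $T'_n$ of $S_n$ contains all its edges, so each outer vertex $t_i$ already has $\{t_{i-1},t_{i+1}\}\subseteq N_{T'_n}(t_i)$. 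For the 4-cut $N_{T_n}(t_i)=\{t_{i-1},t_{i+1},b_i,b_{i+1}\}$ of $T_n$ to fail as a cut of $T_n\cup T'_n$, the triangulation $T'_n$ must contain a ``long'' edge from $t_i$ to some $w\notin N_{T_n}(t_i)$; an analogous requirement holds at each inner vertex $b_i$.

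The core of the proof is then a planarity/geometric argument showing that no single plane triangulation $T'_n$ of $S_n$ can provide all the required long edges: for a suitably small ratio of the inner to the outer polygon radii, the long edges needed at different outer (or different inner) vertices are forced to cross each other pairwise. Hence $T'_n$ can contain at most a bounded number of them, and some vertex $v$ ends up with $N_{T'_n}(v)\subseteq N_{T_n}(v)$, which makes $N_{T_n}(v)$ a 4-cut of $T_n\cup T'_n$ and contradicts 5-connectivity.

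The main obstacle is making this last step fully rigorous. In the purely symmetric construction the ``shifted antiprism'' triangulation $T'_n$, whose zig-zag edges are rotated by one step from those of $T_n$, is plane and provides each $t_i$ with a long edge to $b_{i-1}$, so it attains 5-connectivity. Thus the construction almost certainly requires a controlled breaking of the rotational symmetry -- for instance, an asymmetric perturbation of the inner polygon or the insertion of a few additional carefully placed interior points -- chosen so that any triangulation $T'_n$ inherits incompatible long-edge requirements at several vertices at once. Verifying that every $T'_n$ then leaves at least one $N_{T_n}(t_i)$ or $N_{T_n}(b_i)$ as a 4-cut of $T_n\cup T'_n$ is the hard part of the proof and is where the explicit combinatorial/geometric analysis has to take place.
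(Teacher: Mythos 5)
There is a genuine gap, and you have in fact identified it yourself: the entire substance of the proof is missing. Your symmetric antiprism construction fails (as you note, the shifted antiprism triangulation $T'$ repairs every 4-cut and yields a 5-connected biplane graph), and the ``controlled breaking of the rotational symmetry'' that would rescue it is neither specified nor verified. The difficulty is not a routine perturbation: in an annular two-layer configuration each outer vertex $t_i$ and each inner vertex $b_j$ can see many candidate partners for a noncrossing long edge, so it is far from clear that any perturbation forces the required long edges to cross pairwise. What you have is a correct high-level target (exhibit a vertex $v$ of the union with $N_{T\cup T'}(v)=N_T(v)$ of size 4, or more generally a surviving 4-cut), but no construction for which this can be proven.

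The paper's construction is quite different and is engineered precisely so that the visibility of the ``deficient'' vertices is severely restricted. It uses a large convex cluster consisting of a lower chain $x_1,\dots,x_{2k}$ and an upper chain $y_1,\dots,y_{2k-3}$ (three fewer vertices), together with a small 7-point cluster placed very far below. Four carefully chosen 4-vertex cuts (paths of length 3, called critical trichords) force any putative 5-connecting $T'$ to contain an edge $e$ crossing into the top cluster in a constrained way; then one shows each $y_i$ cannot usefully connect to other $y_j$'s (a triangulation of the resulting convex polygon would leave some $y_j$ with no new edge) nor to the bottom cluster (such an edge would cross $x_{i+1}x_{i+2}$ and conflict with $e$), so $y_i$ must connect to some $x_{j(i)}$ with $j(i)\ge i+3$. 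Since the upper chain is shorter by exactly three, a pigeonhole/shifting argument leaves $y_{2k-3}$ with degree 4, a contradiction. If you want to salvage your approach, you would need to replace the symmetric inner polygon by something playing the role of this deliberately unbalanced convex chain; as written, the proposal does not establish the theorem.
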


\begin{proof}
Our construction is shown in Figure~\ref{pic:NonAugTo5Con2}, where the initial triangulation $T$ appears in Figure~\ref{pic:NonAugTo5Con2}(a) drawn with black edges. The point set has two main clusters. The top cluster consists of $4k-3$ points in convex position, with $2k$ points in a lower chain $x_1,\ldots ,x_{2k}$, and $2k-3$ additional in an upper chain $x_1,y_1,\ldots, y_{2k-3},x_{2k}$. The parameter $k\in \mathbb{N}$ can be made arbitrarily large. The lower cluster consists of 7 points as shown in Figure~\ref{pic:NonAugTo5Con2}(a). The bottom cluster is sufficiently far below the top cluster so that any new edge between a point $y_i$ and a point in the bottom cluster crosses the edge $x_{i+1}x_{i+2}$. Intuitively, the bottom cluster is a ``big dot'' far below the top cluster. It is not difficult to verify that $T$ is 4-connected (i.e., it has no chords, bichords, or separating triangles).

\begin{figure}[htbp]
\centering
\includegraphics[width=0.96\textwidth]{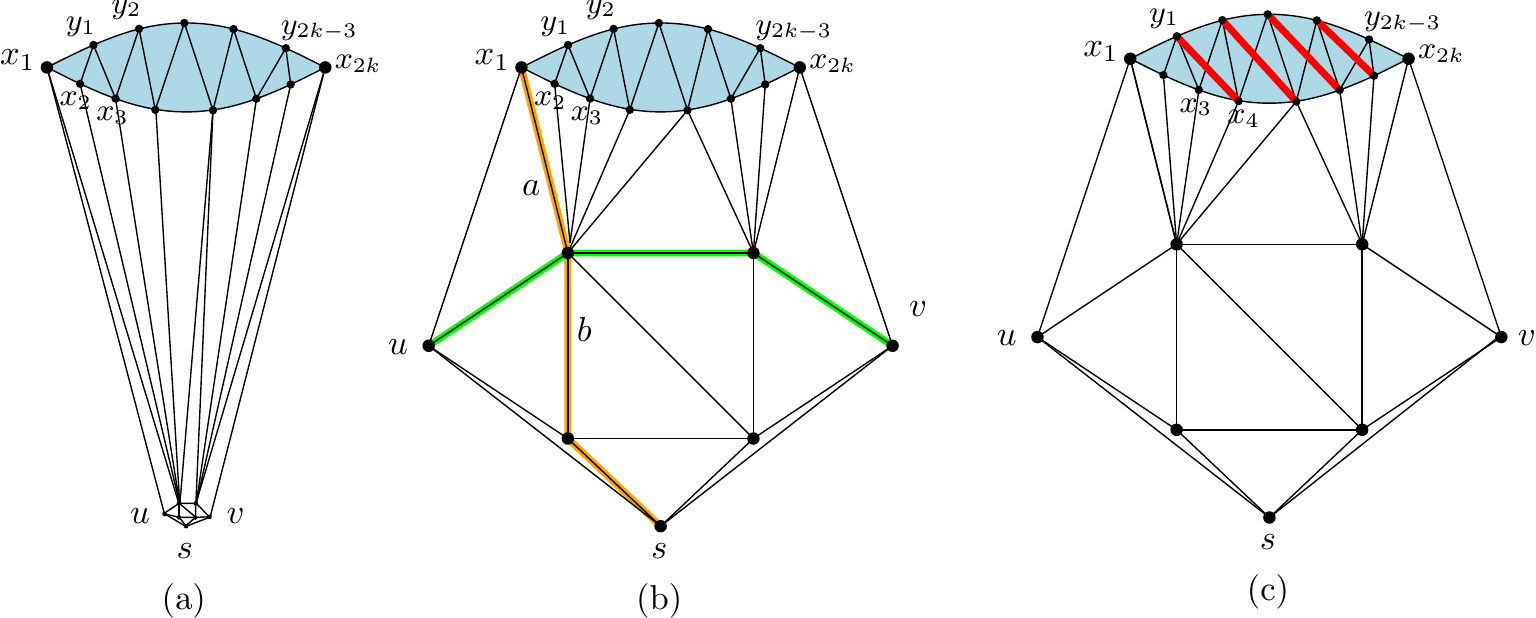}
\caption{
(a) Overview of the construction. The top convex cluster is sufficiently far from the bottom part.
(b) Schematic view of the construction, highlighting two of the four critical trichords of $T$.
(c) To increase the degree of the vertices in the convex cluster, more edges are needed. However, vertex $y_{2k-3}$ cannot connect to any other vertex.
}\label{pic:NonAugTo5Con2}
\end{figure}

Our argument crucially depends on four 4-vertex cuts in $T$, each consists of the vertices of a path with 3 edges. We call these the four \emph{critical trichords} of $T$. Two such critical trichords are highlighted in Figure~\ref{pic:NonAugTo5Con2}(b): one from $u$ to $v$, and one from $s$ to $x_1$. The other two are symmetric (one also goes from $u$ to $v$, and the other from $s$ to $x_{2k}$).

Suppose, to the contrary, that there is a triangulation $T'=(S,E')$ such that $G=(S,E\cup E')$ is 5-connected. Then the vertices on opposite sides of each critical trichord must be connected by at least one edge in $E'$. If an edge $e\in E'$ crosses the trichord from $s$ to $x_1$, then $e$ must be incident to $u$, and it must cross either edge $a$ or edge $b$. If $e$ crosses $b$, it is easy to verify that, no matter what the other endpoint of $e$ is (there are only three possibilities), it is impossible to add pairwise noncrossing edges that cross both trichords between $u$ and $v$, and the trichord from $s$ to $x_{2k}$, but do not cross $e$. We conclude that $e$ must cross $a$ and hence is incident to some vertex in the top cluster.

If $G$ is 5-connected, every vertex must have degree at least 5 in $G$, and so every vertex in $\{x_1, y_1, y_2, \ldots, y_{2k-3}, x_{2k}\}$ must be incident to at least one new edge in $E'\setminus E$. Consider an edge $e_1\in E'\setminus E$ incident to $y_1$. If $e_1$ connects $y_1$ to any vertex in $\{y_3,\ldots,y_{2k-3}\}$, then the vertices on or above $e_1$ form a convex polygon within in the top cluster whose triangulation in $T'$ would necessarily contain two ears; therefore at least one vertex $y_i$ would not be incident to any new edge. If $e_1$ connects $y_1$ to any vertex in the bottom cluster, then $e_1$ crosses edge $x_2x_3$. Since $e$ and $e_1$ do not cross, we have either $e=ux_2$ or $e=e_1=uy_1$: In any case, $x_1$ cannot be incident to any new edge in $E'\setminus E$.
Therefore, $e_1$ must connect $y_1$ to some vertex in the chain $(x_4,x_5,\ldots , x_{2k})$.

Assume that $e_1$ connects $y_1$ to $x_{j(1)}$ for some $j(1)\geq 4$. Using a similar reasoning, $y_2$ should connect to $x_{j(2)}$ for $j(2)\geq 5$, and in general, $y_i$ should connect to $x_{j(i)}$ for some $j(i)\geq i+3$, see Figure~\ref{pic:NonAugTo5Con2}(c). Since the top chain has three fewer vertices than the bottom chain, then $y_{2k-3}$ cannot be connected to any other vertex by a new edge, and its vertex degree is 4 in $G$, contradicting our assumption that $G$ is 5-connected.

\end{proof}

\subsection{Minimal Augmentation}
\label{sec:aug_minimal}
Given a plane graph $G=(S,E)$, we wish to augment $G$ with a minimal set of new edges $E'$ such that we obtain a $k$-connected biplane graph $G=(S,E\cup E')$ for some target value $k$.
In this section we present an efficient solution (Lemma~\ref{lem_minaugment})
when $k=3$ and $G$ is a triangulation. We start with a helpful lemma about
augmenting a plane tree to 2-edge-connectivity.

\begin{lemma}\label{lem:tree}
Given a plane tree $H=(S,E)$ with $n$ vertices and $m$ leaves, let $L\subset S$ be the set of $m$ leaves of $H$.
In  $O(n+m\log m)$ time, one can find a set $E'$ of $\lceil m/2\rceil$ pairwise noncrossing edges
among the leaves of $H$ such that $(S,E\cup E')$ is a 2-edge-connected
biplane graph. Moreover, if the $m$ leaves are in convex position,
then $E'$ can be found in $O(n)$ time, provided that the clockwise ordering of the leaves along their convex hull is given.
\end{lemma}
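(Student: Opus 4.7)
The plan is to read off the leaves in the cyclic order $\ell_0, \ell_1, \ldots, \ell_{m-1}$ in which they appear as we walk around the outer face of $H$, and then to build $E'$ by a pairing of leaves based on this order. Computing the cyclic order is easy: in the convex-position case it coincides with the clockwise order along $\partial \ch(L)$, which is assumed to be given, so a single DFS of $H$ to identify the leaves suffices, costing $O(n)$ time. In the general case I would first locate the leaves by a linear-time DFS, then sort the edges around each relevant internal vertex by polar angle to obtain the rotation system and perform the outer-face tour, giving $O(n + m \log m)$ overall.

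Once the cyclic order is in hand, I would use the ``nested'' pairing $(\ell_i, \ell_{m-1-i})$ for $i = 0, 1, \ldots, \lfloor m/2 \rfloor - 1$, plus one extra edge incident to $\ell_0$ (matching the middle leaf) if $m$ is odd, so that $|E'| = \lceil m/2 \rceil$. In the convex-position case the non-crossing property is immediate, since nested chords of a convex polygon are non-crossing straight-line segments, and the extra edge for odd $m$ shares the endpoint $\ell_0$ with an existing pair and introduces no new crossing. For the 2-edge-connectivity, an edge $e \in E$ of $H$ fails to be a bridge of $H \cup E'$ iff some pair in $E'$ has one endpoint on each side of the cut that $e$ induces on the leaves; the plan is to verify by an index calculation that every splitting of the cyclic sequence into two contiguous arcs is straddled by at least one of the nested pairs, and to reroute the pairing locally (trading a short nested pair for a longer one) in the exceptional cases where the tree isolates two cyclically consecutive leaves by a single bridge.

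The main obstacle is the general (non-convex) position case. Here the cyclic outer-face order need not match any convex-hull order of $L$, and the na\"ive nested pairing may yield crossing straight-line chords: a leaf lying close to the ``centre'' of $H$ can end up on the opposite side of its partner's chord from another nested pair, producing a genuine crossing even though the two pairs are nested cyclically. My plan to resolve this is to first pair the leaves on $\partial \ch(L)$ in nested fashion using their hull order (where cyclic nesting coincides with geometric nesting and therefore guarantees non-crossing), and then to route each interior leaf through a convex ``sector'' cut out by the hull chords, pairing it with another interior leaf of the same sector (or, if there is an odd number in that sector, with a hull endpoint of the bounding chord). Each such new chord then sits inside a bounded convex region disjoint from the other pairs, which preserves non-crossing. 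Verifying that this refined pairing still covers every tree cut requires a case analysis on the structure of $H$ inside each sector, and I expect that to be the most delicate part of the proof.
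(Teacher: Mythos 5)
Your pairing rule is where the argument breaks, and it breaks already in the convex-position case. The nested pairing $(\ell_i,\ell_{m-1-i})$ does give noncrossing chords, but it does not give 2-edge-connectivity: a tree edge whose removal separates exactly the leaves $\{\ell_j,\ldots,\ell_{m-1-j}\}$ from the rest is straddled by no nested pair (pairs with index $<j$ have both endpoints outside this arc, pairs with index $\geq j$ have both endpoints inside it), so it remains a bridge. These ``symmetric'' arcs form a family of roughly $m/2$ bad cuts, not just the single exceptional case you flag (two cyclically consecutive leaves cut off by one bridge); for instance a caterpillar with spine $c_0,c_1,\ldots$ carrying leaves $\ell_0,\ell_{m-1}$ at $c_0$, leaves $\ell_1,\ell_{m-2}$ at $c_1$, and so on realizes many of them at once, so no single local reroute repairs the construction. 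The opposite fixed rule (pair $\ell_i$ with $\ell_{i+\lceil m/2\rceil}$) covers every cut but makes all chords pairwise cross in convex position. This tension is exactly why the paper uses an \emph{adaptive} rule: it repeatedly selects an edge of the current convex hull of the surviving leaves (so noncrossing is automatic, since later edges lie inside the remaining hull) and uses a lowest-common-ancestor comparison to decide which of the two candidate hull edges at a chosen leaf to add, and that choice is what forces every tree edge onto a cycle.

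Your treatment of the general-position case has a second, independent problem. Every leaf of $H$ is incident to a bridge that can only be covered by a new edge at that leaf, so a set $E'$ of $\lceil m/2\rceil$ edges must be a near-perfect matching on $L$; your fallback of pairing a leftover interior leaf ``with a hull endpoint of the bounding chord'' reuses an already-matched leaf and would exceed the budget. More fundamentally, once the pairing is dictated by which geometric sector a leaf falls into rather than by the structure of $H$, there is no reason it straddles every tree cut, and you acknowledge this verification is missing. The paper sidesteps both difficulties because its new edges are always edges of the convex hull of the current leaf set (maintained with a semi-dynamic convex hull structure in $O(m\log m)$ time, or trivially when the leaves are in convex position), so the geometric and the connectivity requirements are discharged by the same greedy step.
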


It is well known (see for example \cite{eswaran}) that an abstract tree with $m$ leaves can be augmented to a 2-edge-connected graph by adding $\lceil m/2\rceil$ new edges among its leaves. However, establishing the noncrossing condition requires a proof.

\begin{proof}
Choose a root $r\in S$ arbitrarily, and let $T$ be the rooted tree obtained from $H$. We denote by $a\prec b$ if $a$ is a descendent of $b$ in the rooted tree $T$. For two leaves $u,v\in L$, denote by $\lca(u,v)$ their \emph{lowest common ancestor} in $T$. For a rooted tree with $O(n)$ vertices, there is a data structure that can report $\lca(u,v)$ for any query vertex pair $\{u,v\}$ in $O(1)$ time after $O(n)$ time preprocessing~\cite{BF00,HT84,SV88}.

We construct the set of new edges recursively. Initialize $E'$ to be the empty set. If $|L|\in \{2,3\}$, then add an arbitrary spanning tree of $L$ with $\lceil |L|/2\rceil$ edges to $E'$. While $|L|>3$, repeat the following loop: Let $v$ be an arbitrary vertex of $\ch(L)$ other than the root, and let $u\in L$ and $w\in L$ be the two vertices of $\ch(L)$  adjacent to $v$. If $\lca(u,v)\succeq \lca(v,w)$, then put edge $uv$ into $E'$, and remove both $u$ and $v$ from $L$. Otherwise, when $\lca(u,v)\prec\lca(v,w)$, put edge $vw$ into $E'$, and remove both $v$ and $w$ from $L$.

In each loop, the algorithm selects an edge from the boundary of the convex hull of $L$, which cannot cross any edge selected in a later loop. It follows that $E'$ consists of pairwise noncrossing edges. The algorithm adds one edge for each pair of vertices until $|L|$ drops below 4, and then it uses $\lceil |L|/2\rceil$ edges. So we have $|E'|=\lceil m/2\rceil$.

It remains to show that $(S,E\cup E')$ is 2-edge-connected. Let $ab$ be an edge of $T$ with $a\prec b$, and let $L_{ab}$ denote the set of leaves that are the descendants of $b$. Consider the loop of the algorithm in which the last leaf of $L_{ab}$ is removed from $L$. The algorithm connects a leaf in $L_{a,b}$ to another leaf, which cannot be in $L_{a,b}$
(recall that the algorithm compares two alternatives). Thus in this loop, the algorithm adds an edge
that induces a cycle containing $ab$. Therefore, every edge $ab$ is contained in a cycle in $(S,E\cup E')$, thus it is 2-edge-connected. The convex hull $\ch(L)$ can be maintained by the semi-dynamic data structure by Hershberger and Suri~\cite{HS92} in $O(m\log m)$ time, hence the total running time is  $O(n+m\log m)$.

Finally, observe that, when the $m$ leaves are in convex position, $\ch(L)$ can be updated in constant time after removing two consecutive vertices of $\ch(L)$, without using the semi-dynamic data structure by Hershberger and Suri. Therefore, in this case, the set $E'$ of pairwise noncrossing edges can be found in $O(n)$ time, after computing $\ch(L)$ in $O(m\log m)$ preprocessing time.
 \end{proof}

\begin{lemma}\label{lem_minaugment}
Given a triangulation $G=(S,E)$, with $n\geq 3$ vertices,
one can find a minimal set of edges $E'$ such that
$G'=(S,E\cup E')$ is a 3-connected biplane graph in $O(n)$ time, after computing $\ch(S)$ in $O(n\log n)$ preprocessing time.
\label{lem:min3connected}
\end{lemma}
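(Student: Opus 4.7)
The plan is to reduce the augmentation problem to a tree--edge-covering problem on the ``cell tree'' of $G$. Since the $2$-vertex cuts of a triangulation are precisely the pairs of endpoints of chords of $\ch(S)$, a biplane augmentation $G'=(S,E\cup E')$ is $3$-connected if and only if every chord of $G$ is properly crossed by some edge of $E'$. Let $H$ be the tree whose nodes are the convex cells into which the chords of $G$ subdivide $\ch(S)$ and whose edges are the shared chords. A new edge $e\in E'$ with endpoints in cells $C$ and $C'$ crosses exactly the chords on the unique $C$--$C'$ path in $H$, so the crossing condition translates into: the union of the $H$-paths induced by $E'$ must cover every edge of $H$. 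Because $E'$ must also be plane, I will realize these covering edges as pairwise non-crossing segments between vertices of $\ch(S)$.

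Let $m$ be the number of leaves of $H$. The lower bound $|E'|\ge\lceil m/2\rceil$ is immediate: the $H$-edge incident to any leaf $\ell$ can only be covered by an $H$-path that starts at $\ell$, hence by a new edge with at least one endpoint inside the corresponding leaf cell, and each edge of $E'$ has at most two such endpoints. For the matching upper bound I apply Lemma~\ref{lem:tree} to $H$. For every leaf cell $L$ I pick a representative $r_L\in S\cap \ch(S)$ distinct from the two endpoints of $L$'s bounding chord; such a vertex exists because the only non-hull edge on the boundary of $L$ is the bounding chord itself, so the remaining boundary vertices lie on $\ch(S)$. The $m$ representatives all lie on $\ch(S)$ and their cyclic order along $\ch(S)$ coincides with the cyclic order of leaves of $H$ around its outer face; in particular they are in convex position. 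Running the algorithm of Lemma~\ref{lem:tree} on $H$ with this leaf-to-representative map selects $\lceil m/2\rceil$ edges, each between two representatives that are cyclically consecutive at the moment of selection (and are removed afterwards), so pairwise non-crossingness is preserved across iterations. By the LCA-based argument of Lemma~\ref{lem:tree}, the selected edges cover every edge of $H$, hence every chord of $G$ is crossed and $G'$ is $3$-connected.

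The step I expect to require the most care is justifying the invocation of Lemma~\ref{lem:tree} on the abstract tree $H$ paired with leaf-representatives on $\ch(S)$, rather than on a plane tree already drawn on $S$. This is legitimate because the proof of Lemma~\ref{lem:tree} relies on only two ingredients: LCA queries on the tree and convex position of its leaves, both of which are intact in our setting by construction. For the time analysis: $\ch(S)$ costs $O(n\log n)$; extracting the chords of $G$ and building $H$ together with its leaf representatives from $G$ and $\ch(S)$ takes $O(n)$; $O(1)$-LCA preprocessing on $H$ is $O(n)$; and the selection loop performs $\lceil m/2\rceil = O(n)$ iterations of $O(1)$ work each. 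Altogether, after the $O(n\log n)$ preprocessing for $\ch(S)$, the algorithm runs in $O(n)$ time, as claimed.
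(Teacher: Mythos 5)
Your proof is correct and follows essentially the same route as the paper: reduce to the dual cell tree $H$, observe the $\lceil m/2\rceil$ lower bound from the leaf cells, pick convex-hull representatives for the leaves, and invoke Lemma~\ref{lem:tree} to get noncrossing augmenting edges whose tree paths cover every chord. The only (cosmetic) differences are that you apply Lemma~\ref{lem:tree} to the abstract tree with convex-position leaves rather than to an explicit plane embedding of $H$, and you derive the lower bound from the path-covering formulation rather than from the convex position of the leaf chords; both variants are sound.
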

\begin{proof}
If the given triangulation $G$ is not 3-connected, then it must contain 2-vertex cuts. Recall that a set $\{u,v\}\subset S$ is a 2-vertex cut if and only if $uv\in E$ is a chord of $\ch(S)$. The biplane graph $G'=(S,E\cup E')$ will be 3-connected if each chord of $\ch(S)$ in $E$ is crossed by at least one edge in $E'$.

Similar to the proof of Theorem~\ref{theo:augment}, we construct a dual graph $H$. The chords of $\ch(S)$ in $E$ decompose $\ch(S)$ into convex cells. The nodes of $H$ correspond to the cells, and two nodes are joined by an edge in $H$ if and only if the corresponding cells share a chord. Clearly, $H$ is a tree (see the thick edges in Figure~\ref{pic:matching}), and can be easily constructed in $O(n)$ time from $G$.
The leaves of $H$ correspond to leaf cells. Each leaf $\ell$ of $H$ is associated with the set of vertices $R_{\ell}\subset S$ that lie in the interior or on the boundary of the cell, excluding the endpoints of the chord on the boundary of the cell. Note that distinct leaves of $H$ are associated with disjoint vertex sets (i.e., $R_{\ell}\cap R_{\ell'}=\emptyset$ for $\ell\neq \ell'$).
Consider the chords that lie on the boundaries of the leaf cells. These chords are in convex position, thus any new edge can cross at most two of them. It follows that we need to add at least $\lceil m/2 \rceil$ new edges, where $m$ denotes the number of leaves of $H$.

We now show that $\lceil m/2\rceil$ new edges suffice, and can be computed in linear time.
For each leaf $\ell$ of $H$, pick a point $v_{\ell}\in R_{\ell}$ on the boundary of $\ch(S)$.
We refer to this point as the \emph{representative} of $\ell$.
Embed $H$ in the plane such that every
leaf $\ell$ is embedded at point $v_\ell$, and every nonleaf node is embedded at an arbitrary
point in the interior of its cell. Clearly, this embedding can be constructed in $O(n)$ time, after computing $\ch(S)$.

\begin{figure}[tb]
\centering
\includegraphics[width=0.5\textwidth]{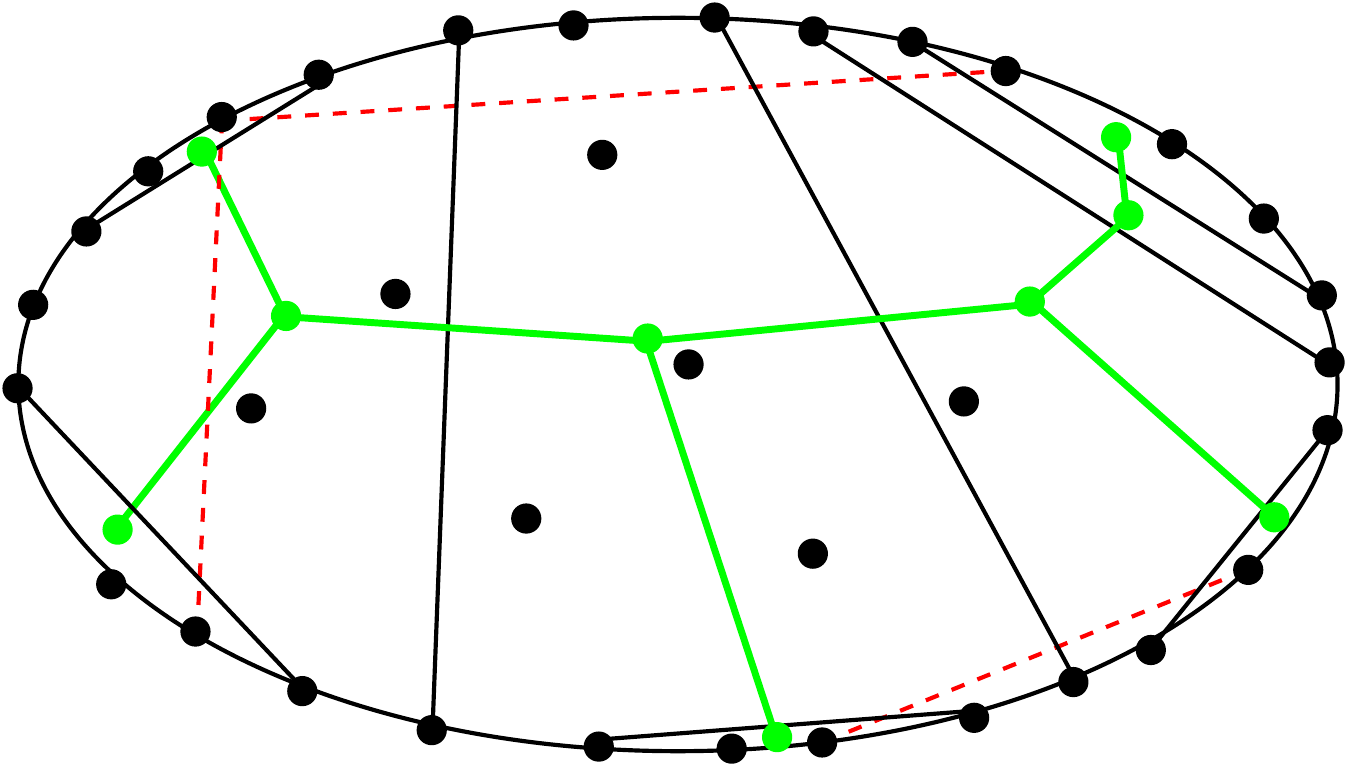}
\caption{A point set, the associated graph $H$ (thick edges), and the additional edges to obtain 3-connectivity (dashed edges).}
\label{pic:matching}
\end{figure}

By Lemma~\ref{lem:tree}, the embedding of $H$ can be augmented to a 2-edge-connected graph $H'$ by a set $E'$
of $\lceil m/2\rceil$ noncrossing edges (dashed edges in Figure~\ref{pic:matching}) between the leaves (that are representatives
from $S$) in $O(n)$ time, because all the representatives are in convex position. We claim that $(S,E\cup E')$ is 3-connected. Every 2-vertex cut of $G$ is a chord $c$ of $\ch(S)$ that corresponds to an edge $e_c$ of the embedding of $H$. When the embedding of $H$ is augmented to $H'$, $e_c$ becomes part of at least one cycle in $H'$, and each of these cycles contains exactly one edge from $E'$. Observe that, if $C$ is a cycle in $H'$ that contains both  $e_c$ and a new edge $e\in E'$, then all edges of $C$ (except for $e$) correspond to chords, which are crossed by edge $e$. In particular, $e$ crosses chord $c$. Since $H'$ is 2-edge-connected, every edge of the embedding of $H$ belongs to a cycle in $H'$, so every chord of $\ch(S)$ is crossed by some edge of $E'$.
\end{proof}

In the above argument, we have added one edge for every two leaf cells of $H$. Associate to each leaf $\ell$ of $H$ all vertices in $R_\ell$ and the two endpoints of the chord. Every vertex of $\ch(S)$ is the endpoint of at most two chords that bound leaf cells. Therefore, one can assign to each leaf cell at least two vertices (a vertex of $R_\ell$ and half of each endpoint of the chord on the boundary of the leaf). It follows that there are at most $\lfloor n/2\rfloor$ leaf cells. This bound is tight, since the chords of the leaf cells may form a cycle of $n/2$ edges.
We obtain an upper bound for the total number of edges added in Lemma~\ref{lem_minaugment}.

\begin{corollary}
Every triangulation $G=(S,E)$ on $n\geq 3$ points can be augmented to a 3-connected biplane
graph by adding at most $\lceil \lfloor n/2\rfloor /2\rceil= \lfloor \frac{n+2}{4} \rfloor$
new edges.
\end{corollary}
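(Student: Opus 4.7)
My plan is to derive the claimed bound directly from Lemma~\ref{lem_minaugment}, which already gives an augmentation of size $\lceil m/2\rceil$, where $m$ is the number of leaves of the dual tree $H$ associated with the chord decomposition of $\ch(S)$. The entire task then reduces to establishing the inequality $m\leq \lfloor n/2\rfloor$ by a simple charging argument, followed by elementary arithmetic.

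First, I would record two structural facts about the leaf cells. Each leaf cell $\ell$ is bounded by exactly one chord of $\ch(S)$ together with some arc of the hull boundary. Let $R_\ell\subset S$ denote the vertices lying in the interior or on the boundary of $\ell$ other than the two endpoints of its defining chord, as in the proof of Lemma~\ref{lem_minaugment}. Fact (i): $|R_\ell|\geq 1$, because even in the extreme case that $\ell$ is a single triangle $uvw$ with $uv$ the chord, the vertex $w$ lies in $R_\ell$. Fact (ii): the sets $R_\ell$ for distinct leaves are pairwise disjoint, since distinct cells share only chord vertices, and $R_\ell$ excludes those by definition.

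Second, I would observe that every vertex of $\ch(S)$ is an endpoint of at most two chords that bound leaf cells of $H$; indeed, if three such chords met at a single hull vertex $v$, two of the resulting leaf cells would lie in the angular region on the same side of a third chord, forcing them to be separated by it in $H$ and contradicting the fact that they correspond to leaves of the dual tree $H$. With this in place, I would assign to each leaf $\ell$ a weight $w(\ell) = |R_\ell| + \frac12 + \frac12$, counting full weight for each vertex of $R_\ell$ and weight $\tfrac12$ for each endpoint of its chord. By Fact (i), $w(\ell)\geq 2$. Summing over all leaves, the $R_\ell$ contribute at most $n$ in total (by Fact (ii)), and each hull vertex contributes at most $2\cdot\tfrac12 = 1$ through the chord-endpoint term; crucially, no vertex is double-counted because $R_\ell$ excludes chord endpoints. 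Hence $2m\leq \sum_{\ell} w(\ell) \leq n$, giving $m\leq \lfloor n/2\rfloor$.

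Finally, combining with Lemma~\ref{lem_minaugment}, the augmentation uses at most $\lceil m/2\rceil \leq \lceil \lfloor n/2\rfloor/2\rceil$ edges. A routine case analysis on $n \bmod 4$ shows $\lceil \lfloor n/2\rfloor/2\rceil = \lfloor (n+2)/4\rfloor$, yielding the stated bound. The only delicate point in the whole argument is verifying that each hull vertex is incident to at most two chords bounding leaf cells; everything else is bookkeeping, so I do not expect any serious obstacle.
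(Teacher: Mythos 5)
Your argument is correct and is essentially the paper's own: the paper also bounds the number $m$ of leaf cells by $\lfloor n/2\rfloor$ via the same charging scheme (each leaf cell receives its $R_\ell$-vertices plus half of each chord endpoint, using that every hull vertex lies on at most two chords bounding leaf cells) and then invokes Lemma~\ref{lem_minaugment}. The only difference is that you supply an explicit justification for the ``at most two chords per hull vertex'' claim, which the paper merely asserts.
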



\section{Conclusions}\label{sec:conclusion}

We have presented several results on the maximum vertex-connectivity attained
by biplane graphs on a point set $S$, either by constructing a graph from scratch
(starting from the empty graph) or by combining a given plane graph with a new
plane graph. Our proofs are constructive and lead to polynomial-time algorithms for
constructing the new edges. Moreover, when the starting graph is a triangulation, we have also presented an efficient algorithm
for finding the \emph{minimum} number of edges needed to augment the triangulation to a 3-connected biplane graph.

Our Theorem~\ref{theo:5conngeneric} shows that $\G$ contains a 5-connected graph
if the point set $S$ contains 14 points in convex position, which is guaranteed
only for very large sets (roughly $1.3\cdot 10^6$ points or more). Recent
research indicates that the threshold can be reduced to 137 and perhaps to 27
using a so-called \emph{U-condition}~\cite{GHTV13} or combining 4-connected plane
graphs~\cite{GHTV} with stars when no 14 points are in convex position. The improved
bound heavily relies on our Theorem~\ref{theo:5conngeneric}, and will be
the subject of a future paper.

In Theorem~\ref{theo:augment}, we have shown that every triangulation (other than the wheel and the fan)
can be augmented to a 4-connected biplane graph by adding a second plane graph on the same point set;
but a second layer is not always sufficient to augment a plane graph to a 5-connected biplane graph.
In general, we do not know whether every sufficiently large plane triangulation, other than the fan,
can be augmented to a 5-connected biplane graph when the new edges are not required to form a plane graph.

Several computational problems related to our results remain open. Is there a polynomial-time algorithm
that, given a point set $S$, finds a 5-connected biplane graph with the \emph{minimum} number of edges
or reports that none exists? Is there a polynomial-time algorithm for finding the \emph{minimum} number of
edges to augment a given 3-connected plane graph with $n\geq 6$ vertices into a 4-connected biplane graph?

\section*{Acknowledgements}
A. G., F. H., M. K., R.I. S. and J. T. were partially supported by ESF EUROCORES programme EuroGIGA, CRP ComPoSe: grant EUI-EURC-2011-4306, and  by project MINECO MTM2012-30951/FEDER. F. H., and R.I. S. were also supported by project Gen. Cat. DGR 2009SGR1040. A. G. and J. T. were also supported by project E58(ESF)-DGA. M.~K. was supported by the Secretary for Universities and Research of the Ministry of Economy and Knowledge of the Government of Catalonia and the European Union. I.~M. was supported by FEDER funds through COMPETE--Operational Programme Factors of Competitiveness, CIDMA and FCT within project PEst-C/MAT/UI4106/2011 with COMPETE number FCOMP-01-0124-FEDER-022690. M.~S.\ was supported by the
project NEXLIZ - CZ.1.07/2.3.00/30.0038, which is co-financed by
the European Social Fund and the state budget of the Czech
Republic, and by ESF EuroGIGA project ComPoSe as F.R.S.-FNRS - EUROGIGA NR 13604.
R.~S. was funded by Portuguese funds through CIDMA (Center for Research and Development in Mathematics and Applications) and FCT (Funda\c{c}\~{a}o para a Ci\^{e}ncia e a Tecnologia), within project PEst-OE/MAT/UI4106/2014, and by FCT grant SFRH/BPD/88455/2012.
C.~T. was supported in part by NSERC (RGPIN 35586) and NSF (CCF-0830734).


\begin{thebibliography}{00}
\itemsep -1pt

\bibitem{AGHTU08}
M. Abellanas, A. Garc\'{\i}a, F. Hurtado, J. Tejel, and J. Urrutia,
Augmenting the connectivity of geometric graphs,
\emph{Comput. Geom. Theory Appl.} {\bf 40} (3) (2008), 220--230.

\bibitem{ISTW10}
M.~Al-Jubeh, G.~Barequet, M.~Ishaque, D.~L.~Souvaine, C.~D.~T\'oth, and A.~Winslow,
Constrained tri-connected planar straight line graphs,
in \emph{Thirty Essays on Geometric Graph Theory (J. Pach, ed.)}, Springer, 2013, pp. 49--70.

\bibitem{AIR09}
M.~Al-Jubeh, M.~Ishaque, K.~R\'edei, D.~L.~Souvaine, C.~D.~T\'oth, and P.~Valtr,
Augmenting the edge connectivity of planar straight line graphs to three,
\emph{Algorithmica} \textbf{61} (4) (2011), 971--999.

\bibitem{Bar74} D. Barnette,
On generating planar graphs,
\emph{Discrete Mathematics} \textbf{7} (1974), 199--208.

\bibitem{BF00}
M.~A.~Bender and M.~Farach-Colton,
The LCA problem revisited,
in Proc. LATIN 2000, LNCS 1776, Springer, 2000, pp.~88--94.

\bibitem{BK}
F. Bernhart and P.C. Kainen, The book thickness of a graph,
\emph{Journal of Combinatorial Theory, Series B} {\bf 27} (1979), 320--331.

\bibitem{BMK05} G. Brinkmann and B.~D.~McKay,
Construction of planar triangulations with minimum degree 5,
\emph{Discrete Mathematics} \textbf{301} (2--3) (2005), 147--163.

\bibitem{But74} J.~W. Butler,
A generation procedure for the simple 3-polytopes with cyclically 5-connected graphs,
\emph{Canadian Journal of Mathematics} XXVI(3) (1974), 686--708.

\bibitem{Chv85}
V. Chv\'atal,
Hamiltonian cycles,
in \emph{The Traveling Salesman Problem (E.L.Lawler et al., eds.)},
John Wiley, 1985, pp.~403--429.

\bibitem{DDGC97} T. K. Dey, M. B. Dillencourt, S. K. Ghosh and J. M. Cahill,
Triangulating with high connectivity,
\emph{Computational Geometry: Theory and Applications} \textbf{8} (1997), 39--56.

\bibitem{DGR13}
A.~A.~Diwan, S.~K.~Ghosh, and B.~Roy,
Four-connected triangulations of planar point sets,
manuscript, 2013, \texttt{arXiv:1310.1726}.

\bibitem{dobrev} S. Dobrev, E. Kranakis, D. Krizanc O. Morales Ponce and L. Stacho,
Approximating the edge length of 2-edge connected planar geometric graphs on a set of points,
in Proc. LATIN 2012, LNCS 7256, Springer, 2012, pp.~255--266.

\bibitem{eswaran}
K.P. Eswaran and R.E. Tarjan,
Augmentation problems,
\emph{SIAM Journal of Computing} \textbf{5} (1976), 653--665.

\bibitem{F48} I. F\'ary,
On straight-line representation of planar graphs,
\emph{Acta Scientiarum Mathematicarum} (Szeged) \textbf{11} (1948), 229--233.

\bibitem{GHHTV09}
A. Garc\'{\i}a, C. Huemer, F. Hurtado, J. Tejel, and P. Valtr,
On triconnected and cubic plane graphs on given point sets,
\emph{Comput. Geom. Theory Appl.} {\bf 42} (9) (2009), 913--922.

\bibitem{GHTV13}
A. Garc\'{\i}a, C. Huemer, J. Tejel, and P. Valtr,
On 4-connected geometric graphs, in:
\emph{Proc. XV Spanish Meeting on Computational Geometry}, 2013, pp. 123--126.

\bibitem{GHTV}
A. Garc\'{\i}a, C. Huemer, J. Tejel, and P. Valtr,
Personal communication (manuscript in preparation).

\bibitem{GHKMSSTT13-I}
A.~Garc\'{i}a, F.~Hurtado, M.~Korman, I.~Matos, M.~Saumell, R.~Silveira,
  J.~Tejel, and C.~D. T\'{o}th.
\newblock Geometric biplane graphs {I}: Maximal graphs.
\newblock {\em Graphs {\&} Combinatorics}, 31(2):407--425, 2015.
\newblock Special issue of selected papers from the Mexican Conference on
  Discrete Mathematics and Computational Geometry (2013). Also available on {\em arXiv}

\bibitem{HT84}
D.~Harel and R.~E.~Tarjan,
Fast algorithms for finding nearest common ancestors,
\emph{SIAM J. Comput.} \textbf{13} (2) (1984), 338--355.

\bibitem{HS92}
J.~Hershberger and S.~Suri,
Applications of a semi-dynamic convex hull algorithm,
\emph{BIT} \textbf{32} (2) (1992), 249--267

\bibitem{HT13} F.~Hurtado and C.~D.~T\'oth,
Plane geometric graph augmentation: a generic perspective,
in \emph{Thirty Essays on Geometric Graph Theory (J. Pach, ed.)}, Springer, 2013, pp. 327--354.

\bibitem{HSV99} J. P. Hutchinson, T. C. Shermer and A. Vince,
On representations of some thickness-two graphs,
\emph{Computational Geometry: Theory and Applications} {\bf 13} (1999), 161--171.

\bibitem{kranakis} E. Kranakis, D. Krizanc, O. Morales Ponce and L. Stacho,
Bounded length, 2-edge augmantation of geometric planar graphs,
\emph{Discrete Math., Alg. and Appl.} {\bf 4} (2012), 385--397.

\bibitem{RW12} I. Rutter and A. Wolff,
Augmenting the connectivity of planar and geometric graphs,
\emph{Journal of Graph Algorithms and Applications} \textbf{16} (2) (2012), 599--628.

\bibitem{SV88}
B.~Schieber and U.~Vishkin,
On finding lowest common ancestors: simplification and parallelization
\emph{SIAM J. Comput} \textbf{17} (6) (1988), 1253-–1262.

\bibitem{Csa12}
C.~D. T\'{o}th,
Connectivity augmentation in planar straight line graphs,
{\em European J. of Combinatorics}, \textbf{33} (3) (2012), 408--425.

\bibitem{TV05} G.~T\'oth and P.~Valtr,
The {Erd\H{o}s-Szekeres} theorem: upper bounds and related results,
in \emph{Combinatorial and Computational Geometry},
vol.~52 of MSRI Publications, 2005, pp. 557--568.

\bibitem{T56} W. T. Tutte,
A theorem on planar graphs,
\emph{Trans. Amer. Math. Soc.} 82 (1956), 99--116.

\bibitem{Wid82}
A. Wigderson,
The complexity of the Hamiltonian circuit problem for maximal planar graphs,
Technical Report 298, Princeton University, EECS Department, 1982.

\end{thebibliography}
\end{document}